\title{A global version of the Koon-Marsden Jacobiator formula}
\author[Paula \ Balseiro]{Paula \ Balseiro}
\address{Paula \ Balseiro:
Universidade Federal Fluminense, Instituto de Matem\'atica, Rua Mario Santos Braga S/N, 24020-140,  Niteroi, Rio de Janeiro, Brazil.} \email{pbalseiro@vm.uff.br}
\theoremstyle{plain}
\newtheorem{theorem}{Theorem}[section]
\newtheorem{lemma}[theorem]{Lemma}
\newtheorem*{theorem*}{Theorem}
\newtheorem{remarkth}[theorem]{Remark}
\theoremstyle{definition}
\newenvironment{remark}{\begin{remarkth}\upshape}{\hfill$\diamond$\end{remarkth}}
\def\W{\mathcal{W}}
\def\M{\mathcal{M}}
\def\V{\mathcal{V}}
\def\S{\mathcal{S}}
\def\C{\mathcal{C}}
\def\Ham{\mathcal{H}}
\def\R{\mathbb{R}}
\def\L{\mbox{Leg}}
\def\red{{\mbox{\tiny{red}}}}
\def\nh{{\mbox{\tiny{nh}}}}
\def\B{{\mbox{\tiny{$B$}}}}
\def\subW{{\mbox{\tiny{$\W$}}}}
\def\subC{{\mbox{\tiny{$\C$}}}}
\def\subS{{\mbox{\tiny{$\S$}}}}
\def\subM{{\mbox{\tiny{$\M$}}}}
\def\a{\alpha}
\begin{document}
\maketitle

\begin{abstract}
In this paper we study the Jacobiator (the cyclic sum that vanishes
when the Jacobi identity holds) of the almost Poisson brackets
describing nonholonomic systems. We revisit the local formula for
the Jacobiator established by Koon and Marsden in \cite{MarsdenKoon}
using suitable local coordinates and explain how it is related to
the global formula obtained in \cite{paula}, based on the choice of
a complement to the constraint distribution. We use an example to
illustrate the benefits of the coordinate-free viewpoint.
\end{abstract}

\begin{center} {\it Dedicated to the memory of J.E. Marsden}
\end{center}

\tableofcontents

\section{Introduction} \label{S:Intro}
The geometric approach to nonholonomic systems was among the many
research interests of J. E. Marsden, and his contributions to this
area were fundamental. A system with nonholonomic constraints can be
geometrically described by an almost Poisson bracket
\cite{IbLeMaMa1999,Marle1998,SchaftMaschke1994}, whose failure to
satisfy the Jacobi identity, measured by the so-called Jacobiator,
is precisely what encodes the nonholonomic nature of the system.
There is a vast literature on the study of such nonholonomic
brackets and their properties, starting with the early work of
Chaplygin \cite{Chapligyn_reducing_multiplier}, see e.g.
\cite{BS93,IbLeMaMa1999,MovingFrames,Fernandez,BorisovMamaev2008,Naranjo2008,JovaChap}. An explicit formula
for the Jacobiator of nonholonomic brackets, expressed in suitable
local coordinates, was obtained by Koon and Marsden in their 1998
paper \cite{MarsdenKoon}. In the present paper, we revisit the
Koon-Marsden formula of \cite{MarsdenKoon} and explain how it can be
derived from the coordinate-free Jacobiator formula for nonholonomic
brackets obtained in \cite{paula}.

We organize the paper as follows. In Section~\ref{S:NHsystem}, we
recall the hamiltonian viewpoint to systems with nonholonomic
constraints. For a nonholonomic system on a configuration manifold
$Q$, determined by a lagrangian $L:TQ \to \R$ and a nonintegrable
distribution $D$ on $Q$ (the {\it constraint distribution}, defining
the permitted velocities of the system), we consider the induced
{\it nonholonomic bracket}  $\{ \cdot, \cdot \}_\nh$ defined on the
submanifold $\M:=\L(D)$ of $T^*Q$, where $\L:T^*Q \to TQ$ is the
Legendre transform (see Section~\ref{Sub:nhb}). In
Section~\ref{Sub:jac} (see Theorem~\ref{T:MarsdenKoon}) we recall
the global formula for the Jacobiator of $\{\cdot,\cdot\}_\nh$ from
\cite{paula}, which depends on the choice of a complement $W$ of the
constraint distribution $D$ such that $TQ=D \oplus W$. As shown in
\cite{paula}, this formula is useful to provide information about
properties of reduced nonholonomic brackets in the presence of
symmetries.

In Section~\ref{S:adapted} we recall the choice of coordinates,
suitably adapted to the constraints, used by Koon and Marsden in
\cite{MarsdenKoon}, and in terms of which their Jacobiator formula
is expressed. We then compare the global and local viewpoints in
Section~\ref{S:coord}, explaining how one can derive the local
Jacobiator formula in \cite{MarsdenKoon} from the coordinate-free
formula in \cite{paula}.

Since the formula in \cite{paula} is coordinate free, it can be used
in examples without specific choices of coordinates. We illustrate
this fact studying the {\it snakeboard}, following
\cite{Ostrowski,MarsdenKoon}; here the natural coordinates in the
problem are not adapted to the constraints so, in principle, the
local formula from \cite{MarsdenKoon} cannot be directly applied.

\medskip

\noindent {\bf Acknowledgments}: I thank the organizers of the {\it Focus
Program on Geometry, Mechanics and Dynamics, the Legacy of Jerry
Marsden}, held at the Fields Institute in Canada, for their
hospitality during my stay. I also benefited from the financial
support given by Mitacs (Canada), and I am specially grateful to
Jair Koiller for his help. I also thank FAPERJ (Brazil) and the GMC
Network (projects MTM2012-34478, Spain) for their support.

\section{Nonholonomic systems} \label{S:NHsystem}

\subsection{The hamiltonian viewpoint} \label{Sub:ham}

A nonholonomic system is a mechanical system on a configuration
manifold $Q$ with constraints on the velocities which are not
derived from constraints in the positions. Mathematically, it is
defined by a lagrangian $L :TQ \to \R$ of mechanical type, i.e., $L=
\kappa - U$ where $\kappa$ is the kinetic energy metric and $U \in
C^\infty(Q)$ is the potential energy, and a nonintegrable
distribution $D$ on $Q$ determining the constraints, see
\cite{BlochBook,CushmannBook}. 
If $D$ is an
integrable distribution then the system is called {\em holonomic}.

In order to have an intrinsic formulation of the dynamics of
nonholonomic systems,  let us consider the Legendre transform $\L:TQ
\to T^*Q$ associated to the lagrangian $L$. The Legendre transform
is a diffeomorphism since $\L = \kappa^\flat$, where
$\kappa^\flat:TQ \to T^*Q$ is defined by
$\kappa^\flat(X)(Y)=\kappa(X,Y)$. We denote by
$\Ham:T^*Q \to \R$ the hamiltonian function associated to the lagragian $L$. 

We define the constraint submanifold $\M$ of $T^*Q$ by $\M =
\kappa^\flat(D)$.  Note that $\M$ is a vector subbundle of $T^*Q$.
We denote by $\tau: \M \to Q$ the restriction to $\M$ of the
canonical projection $\tau_Q:T^*Q \to Q$.

On $\M$ we have a natural 2-form $\Omega_\subM$ given by
$\Omega_\subM := \iota^*\Omega_Q$  where $\iota :\M \to T^*Q$ is the
inclusion and $\Omega_Q$ is the canonical 2-form on $T^*Q$. The
constraints are encoded on a (regular) distribution $\C$ on $\M$
defined, at each $m \in \M$, by
\begin{equation} \label{Eq:C}
\C_m = \{ v \in T_m\M \ : \ T\tau(v) \in D_{\tau(m)} \}.
\end{equation}

It was proven in \cite{BS93} that the point-wise restriction of the
2-form $\Omega_\subM$ to $\C$, denoted by $\Omega_\subM|_\C$, is
nondegenerate. That is, if $X \in \Gamma(\C)$ is such that ${\bf
i}_X \Omega_\subM|_\C \equiv 0$, then $X =0$.   Therefore, there is
a unique vector field $X_\nh$ on $\M$, called the {\it nonholonomic
vector field}, such that $X_\nh(m) \in \C_m$ and
\begin{equation} \label{Eq:NH-Dyn}
{\bf i}_{X_\nh} \Omega_\M |_\C  = d\Ham_\subM|_\C,
\end{equation}
where $\Ham_\subM := \iota^*\Ham: \M \to \R$.
The integral curves of $X_\nh$ are solutions of the nonholonomic dynamics \cite{BS93}.

In order to write \eqref{Eq:NH-Dyn} in local coordinates, suppose
that the constraint distribution $D$ is described (locally) by the
annihilators of 1-forms $\epsilon^a$ for $a=1,...,k$, that is $D=
\{(q, \dot q) \ : \ \epsilon^a(q) (\dot q) = 0 \mbox{ for all }
a=1,...,k \}.$ If we consider canonical coordinates $(q^i, p_i)$ on
$T^*Q$ then the constraints are given by
$$
\epsilon^a_i (q) \frac{\partial \Ham}{\partial p_i} = 0, \qquad \mbox{for } a=1,...,k,
$$
and \eqref{Eq:NH-Dyn} becomes
$$
\dot q^i = \frac{\partial \Ham}{\partial p_i}, \qquad \dot p_i = - \frac{\partial \Ham}{\partial q^i} + \lambda_a \epsilon^a,
$$
where $\lambda_a$ are functions (called the Lagrange multipliers) which are uniquely determined by the fact that the constraints are satisfied.

\subsection{The nonholonomic bracket}\label{Sub:nhb}

Recall that an {\it almost Poisson bracket} on $\M$ is an
$\R$-bilinear bracket $\{\cdot, \cdot \}: C^\infty(\M) \times
C^\infty(\M) \to C^\infty(\M)$ that is skew-symmetric and satisfies
the Leibniz condition:
$$
\{fg,h\} = f\{g,h\} + \{f,h\}g, \qquad \mbox{for } f,g,h \in C^\infty(\M).
$$
If $\{\cdot, \cdot \}$ satisfies the Jacobi identity, then the
bracket is called {\it Poisson}. The {\it hamiltonian vector field}
$X_f$ on $\M$ associated to a $f \in C^\infty(\M)$ is defined by
\begin{equation} \label{Eq:HamVF}
X_f = \{ \cdot , f\}
\end{equation}
and the {\it characteristic distribution} of $\{\cdot, \cdot \}$ is
a distribution on the manifold $\M$ whose fibers are spanned by the
hamiltonian vector fields. If the bracket is Poisson, then its
characteristic distribution is integrable. However, the converse is
not always true.

From the Leibniz identity it follows that there is a one-to-one
correspondence between almost Poisson brackets $\{\cdot, \cdot \}$
and bivector fields $\pi \in \bigwedge^2(T\M)$ given by
\begin{equation} \label{Eq:Pi-bracket}
\{f,g\} = \pi(df,dg), \qquad f,g, \in C^\infty (\M).
\end{equation}
Let us denote by $\pi^\sharp:T^*\M \to T\M$ the map defined by
$\beta(\pi^\sharp(\alpha)) = \pi(\alpha, \beta).$ Then, using
\eqref{Eq:HamVF}, the hamiltonian vector field $X_f$ is also given
by $X_f = - \pi^\sharp(df)$ and the characteristic distribution of
$\pi$ is the image of $\pi^\sharp$. The Schouten bracket $[\pi,\pi]$
(see \cite{MarsdenRatiu}) measures the failure of the Jacobi
identity of $\{\cdot , \cdot \}$ through the relation
\begin{equation}
\frac{1}{2}[\pi, \pi](df,dg,dh) = \{f,\{g,h\}\} + \{f, \{g,h\}\}+ \{g,\{h,f\}\} + \{h,\{f,g\}\} \label{E:Jacobi}
\end{equation}
for $f, g, h \in C^\infty(\M)$. So we refer to the trivector
$\frac{1}{2}[\pi,\pi]$ as the {\it Jacobiator} of $\pi$, which is
zero when $\pi$ is a Poisson bivector.

Coming back to our context, consider a nonholonomic system
on a manifold $Q$ defined by a lagrangian $L$ and a constraint distribution $D$.   Due to the nondegeneracy  of $\Omega_\M|_\C$,
there  is an induced bivector field $\pi_\nh \in \bigwedge^2(T\M)$
defined at each $\alpha \in T^*\M$ by
\begin{equation} \label{Eq:Pinh}
\pi^\sharp_\nh(\a)=X \quad \mbox{if and only if} \quad {\bf i}_X \Omega_\subM |_\C = -\alpha|_\C.
\end{equation}
The characteristic distribution of $\pi_\nh$ is the distribution
$\C$ defined in \eqref{Eq:C}. Since $\C$ is not integrable,
$\pi_\nh$ is not Poisson.

The bivector field $\pi_\nh$ is called the {\it nonholonomic
bivector field} \cite{SchaftMaschke1994,Marle1998, IbLeMaMa1999} and
it describes the dynamics in the sense that
\begin{equation} \label{Eq:Xnh}
\pi_\nh^\sharp(d\Ham_\subM) = -X_\nh.
\end{equation}
By \eqref{Eq:Pi-bracket}, the nonholonomic bivector $\pi_\nh$
defines uniquely an almost Poisson bracket $\{\cdot, \cdot \}_\nh$
on $\M$, called the {\it nonholonomic bracket}. From \eqref{Eq:Pinh}
we observe that
$$
\{f,g \}_\nh = \Omega_\subM(X_f, X_g) \qquad \mbox{for } f,g \in C^\infty(\M),
$$
where $X_f=-\pi_\nh^\sharp(df)$ and $X_g=-\pi_\nh^\sharp(dg)$. The
nonholonomic vector field \eqref{Eq:Xnh} is equivalently defined
through the equation $X_\nh = \{ \cdot , \Ham_\subM\}_\nh$.

%%%%%%%%%%%%%%%%%%%%%%%%%%%%%%%%%%%%%%%%%%%%%%%%%%%%%%%%%%%%%%%%%%%%%%%%%%%%%%%%
\subsection{The Jacobiator formula}\label{Sub:jac}

Recall that $\C$ is a smooth distribution on $\M$. Choose a
complement $\W$ of $\C$ on $T\M$ such that, for each $m \in \M$,
\begin{equation} \label{Eq:SplittingOfTM}
T_m\M = \C_m \oplus \mathcal{W}_m.
\end{equation}

Let $P_\subC :T\M \to \C$ and $P_\subW :T\M \to \W$ be the
projections associated to the decomposition
\eqref{Eq:SplittingOfTM}. Since $P_\subW :T\M \to \W$ can be seen as
a $\W$-valued 1-form, following \cite{paula}, we define the
$\W$-valued 2-form ${\bf K}_\subW$ given by
\begin{equation}
\label{Def:K} {\bf K}_\subW(X,Y) = - P_\subW( [P_\subC (X),
P_\subC(Y)] ) \qquad \mbox{for } X,Y \in \mathfrak{X}(\M).
\end{equation}

Once a complement $\W$ of $\C$ is chosen, we obtain a
coordinate-free formula for the Jacobiator of the nonholonomic
bracket.

\begin{theorem}\cite{paula}  \label{T:MarsdenKoon}
The following holds:
\begin{equation} \label{Eq:Jacobiator}
\frac{1}{2}[\pi_{\emph \nh}, \pi_{\emph \nh}] (\alpha, \beta, \gamma)=  \Omega_\M ({\bf K}_\subW (\pi^\sharp_{\emph \nh}(\alpha), \pi^\sharp_{\emph \nh}(\beta) ), \pi^\sharp_{\emph \nh}(\gamma)) - \gamma \left( {\bf K}_\subW(\pi^\sharp_{\emph \nh}(\alpha), \pi^\sharp_{\emph \nh}(\beta)) \right) + \textup{cyclic}.
\end{equation}
for $\alpha, \beta, \gamma \in T^*\M$.
\end{theorem}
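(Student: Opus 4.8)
The plan is to reduce to exact one-forms and then play two different expressions for the Jacobiator against each other. Since both sides of \eqref{Eq:Jacobiator} are tensorial in $\alpha,\beta,\gamma$, it suffices to establish the identity for $\alpha=df$, $\beta=dg$, $\gamma=dh$ with $f,g,h\in C^\infty(\M)$. I would abbreviate $Y_f:=\pi_\nh^\sharp(df)$, and likewise $Y_g,Y_h$; by \eqref{Eq:Pinh} each of these is a section of $\C$, and the defining relation yields the two identities $\{f,g\}_\nh=\Omega_\M(Y_f,Y_g)$, $\; Y_f(h)=\{f,h\}_\nh$, and $\Omega_\M(Z,Y_h)=dh(Z)$ for all $Z\in\Gamma(\C)$, which I will use repeatedly.

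First I would rewrite the Jacobiator using the symplectic calculus on $\M$. Starting from \eqref{E:Jacobi} with $\{f,g\}_\nh=\Omega_\M(Y_f,Y_g)$, the Koszul formula for $d\Omega_\M$ together with the fact that $\Omega_\M=\iota^*\Omega_Q$ is closed should collapse the cyclic sum of double brackets into
\[
\tfrac12[\pi_\nh,\pi_\nh](df,dg,dh)=\Omega_\M([Y_f,Y_g],Y_h)+\textup{cyclic}.
\]
Second, I would compute the same quantity purely algebraically: from $Y_f(h)=\{f,h\}_\nh$ one has $dh([Y_f,Y_g])=\{f,\{g,h\}_\nh\}_\nh-\{g,\{f,h\}_\nh\}_\nh$, and after summing cyclically the skew-symmetry of the bracket folds the six double brackets into twice the Jacobiator, giving
\[
[\pi_\nh,\pi_\nh](df,dg,dh)=dh([Y_f,Y_g])+\textup{cyclic}.
\]

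The final step is to subtract the two expressions and split the Lie bracket along \eqref{Eq:SplittingOfTM}. Writing $[Y_f,Y_g]=P_\subC[Y_f,Y_g]+P_\subW[Y_f,Y_g]$, the $\C$-parts of the two displays coincide, since $\Omega_\M(P_\subC[Y_f,Y_g],Y_h)=dh(P_\subC[Y_f,Y_g])$ because $P_\subC[Y_f,Y_g]\in\Gamma(\C)$; hence they cancel upon subtraction, leaving only the $\W$-parts $dh(P_\subW[Y_f,Y_g])-\Omega_\M(P_\subW[Y_f,Y_g],Y_h)$. Since $Y_f,Y_g\in\Gamma(\C)$, the definition \eqref{Def:K} reads $P_\subW[Y_f,Y_g]=-{\bf K}_\subW(Y_f,Y_g)$, and substituting this reproduces exactly $\Omega_\M({\bf K}_\subW(Y_f,Y_g),Y_h)-dh({\bf K}_\subW(Y_f,Y_g))$, which is the cyclic summand of \eqref{Eq:Jacobiator} once we recall $Y_h=\pi_\nh^\sharp(dh)$ and $dh=\gamma$.

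I expect the main obstacle to be the term $\Omega_\M([Y_f,Y_g],Y_h)$ produced by the Cartan calculus: its $\C$-component $\Omega_\M(P_\subC[Y_f,Y_g],Y_h)$ is generally nonzero and is not of the form occurring in \eqref{Eq:Jacobiator}, so the first computation alone cannot deliver the result. The reason for also carrying out the second, algebraic computation is precisely that it generates the same spurious $\C$-component $dh(P_\subC[Y_f,Y_g])$, so that the subtraction eliminates it and isolates the genuinely $\W$-valued obstruction ${\bf K}_\subW$. The two points demanding care are the verification that the closedness of $\Omega_\M$ leaves no residual $d\Omega_\M$ contribution in the first display, and the cyclic bookkeeping that produces the factor $2$ in the second.
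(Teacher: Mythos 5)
Your proposal is correct, and in fact it supplies something the paper itself does not contain: Theorem \ref{T:MarsdenKoon} is quoted here from \cite{paula} without proof, where it is established in a more general form, valid for any bivector field $\pi_\B$ gauge-related to $\pi_\nh$ (i.e.\ defined by \eqref{Eq:Pinh} with $\Omega_\subM$ replaced by $\Omega_\subM+B$). Measured against that, your argument is a clean, self-contained proof of the stated special case. All the steps check out: tensoriality of both sides legitimately reduces the identity to exact forms; your three working identities $\{f,g\}_\nh=\Omega_\subM(Y_f,Y_g)$, $Y_f(h)=\{f,h\}_\nh$, and $\Omega_\subM(Z,Y_h)=dh(Z)$ for $Z\in\Gamma(\C)$ all follow from \eqref{Eq:Pinh} with the paper's sign conventions; the Koszul formula applied to the closed form $\Omega_\subM$ gives exactly
\begin{equation*}
Y_f\big(\Omega_\subM(Y_g,Y_h)\big)+\textup{cyclic}=\Omega_\subM([Y_f,Y_g],Y_h)+\textup{cyclic},
\end{equation*}
which is your first display; and the algebraic identity $dh([Y_f,Y_g])=\{f,\{g,h\}_\nh\}_\nh+\{g,\{h,f\}_\nh\}_\nh$ does produce, after the cyclic sum, twice the Jacobiator, which is your second display. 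Subtracting, the $\C$-components cancel precisely by the third identity, and since $Y_f,Y_g\in\Gamma(\C)$ one has $P_\subW[Y_f,Y_g]=-{\bf K}_\subW(Y_f,Y_g)$, yielding the cyclic summand of \eqref{Eq:Jacobiator} with the correct signs. What your route buys is economy: it uses only \eqref{Eq:Pinh}, $d\Omega_\subM=0$, and the splitting \eqref{Eq:SplittingOfTM}, with no auxiliary machinery. What it gives up is the generality of \cite{paula}: if $\Omega_\subM$ is replaced by $\Omega_\subM+B$ with $B$ a non-closed 2-form, your first display acquires a $dB$-term that must be carried through, and this is exactly where the extra terms in the general gauge-related formula come from; your proof would need that modification to cover the setting in which the paper actually applies the result to reduction.
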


In fact, a more general formula appeared in \cite{paula}, valid for
any bivector field $\pi_\B$ gauge related to $\pi_\nh$. In that
context, this formula was used to understand under which
circumstances the reduction of $\pi_\B$ by symmetries had an
integrable characteristic distribution (even if it was not Poisson).

We will now show how this formula recovers the coordinate Jacobiator
formula obtained in \cite{MarsdenKoon}.

%%%%%%%%%%%%%%%%%%%%%%%%%%%%%%%%%%%%%%%%%%%%%%%%%%%%%%%%%%%%%%%%%%%%%%
\section{The Koon-Marsden adapted coordinates}\label{S:adapted}

In this section we will recall the Koon-Marsden approach to writing
the Jacobiator of a nonholonomic bracket, based on a suitable choice
of coordinates of the manifold $Q$. After this, we will write the
objects presented in Section \ref{S:NHsystem} (such as the 2-forms
$\Omega_\M$ and ${\bf K}_\subW$, and the bivector $\pi_\nh$) in such
local coordinates in order to see the equivalence between the local
and global viewpoints.

We start by recalling the coordinates chosen in \cite{MarsdenKoon}.
Consider a nonholonomic system given by a lagrangian $L$ and a
nonintegrable distribution $D$. Let $\epsilon^a$ for $a=1,...,k$ be
1-forms that span the annihilator of $D$, i.e., $D^\circ  =
\textup{span}\{\epsilon^a\}$. The authors in \cite{MarsdenKoon}
introduce local coordinates $(q^i) = (r^\a,s^a)$ on $Q$ for which
each 1-form $\epsilon^a$ has the form
\begin{equation}\label{Eq:CoordMK}
\epsilon^a = ds^a +A_\a^a(r,s) dr^\a,
\end{equation}
where $A_\a^a$ are functions on $Q$ for $\a=1,...,n-k$ and
$a=1,...,k$. During the present paper, we refer to the coordinates
$(r^\a, s^a)$ such that \eqref{Eq:CoordMK} is satisfied as {\it
coordinates adapted to the constraints}.

These coordinates induce a (local) basis of $D$ given by
$\left\{X_\a:= \frac{\partial}{\partial r^\a} - A_\a^a
\frac{\partial}{\partial s^a}\right\}$.  We complete the basis
$\{X_\a\}$ and $\{\epsilon^a\}$ in order to obtain dual basis on
$TQ$ and $T^*Q$, that is
$$
TQ=\textup{span}\left\{X_\a, \frac{\partial}{\partial s^a} \right\}
\quad \mbox{and} \quad T^*Q=\textup{span}\{dr^\a, \epsilon^a\}.
$$

Let $(\tilde p_\a, \tilde p_a)$ be the coordinates on $T^*Q$
associated to the basis $\{dr^\a, \epsilon^a\}$. Since $\M =
\textup{span}\{\kappa^\flat(X_\a)\} \subset T^*Q$ then
\begin{equation} \label{Eq:M}
\M=\{(q^i, \tilde p_a,\tilde p_\a) \ : \ \tilde p_a =
[\kappa_{a\a}][\kappa_{\a\beta}]^{-1}\tilde p_\beta =J_a^\beta
\tilde p_\beta\},
\end{equation}
where $[\kappa_{a\a}]$ denotes the $(k\times(n-k))$-matrix with
entries given by $\kappa_{a\a} = \kappa( \frac{\partial}{\partial
s^a} ,X_\a)$, $[\kappa_{\a\beta}]^{-1}$ is the inverse matrix
associated to the invertible $((n-k)\times(n-k))$-matrix with
entries given by $\kappa_{\a\beta} = \kappa(X_\a,X_\beta)$ and
$J_a^\beta$ are the functions on $Q$ representing the entries of the
matrix $[\kappa_{a\a}][\kappa_{\a\beta}]^{-1}$. Therefore, each
element $(r^\a, s^a; \tilde p_\a)$ represents a point on the
manifold $\M$.

In \cite{MarsdenKoon} the Jacobiator formula is written in terms of
the curvature of an Ehresmann connection. 
The local coordinates $(r^\a, s^a)$ induce a fiber bundle with projection  given by $\upsilon(r^\a,s^a) = r^\a$. 
Let us call $W$ the vertical distribution defined by this projection.

The Ehresmann connection $A$ on $\upsilon:Q=\{r^\a,s^a\}\to R=\{r^\a\}$ is chosen in such a way that its horizontal space agrees with the distribution $D$. The connection $A$ is represented by a
vector-valued differential form given, at each $X\in TQ$, by
\begin{equation} \label{Eq:A}
A(X)=\epsilon^a(X)\frac{\partial}{\partial s^a}.
\end{equation}

 The {\it curvature} associated to this connection is
a vector-valued 2-form ${\bf K}_W$ defined on $X,Y \in
\mathfrak{X}(Q)$ by
\begin{equation}\label{Eq:Kw-MK}
{\bf K}_W(X,Y) = -A([P_D (X), P_D(Y)] ),
\end{equation}
where $P_D:TQ \to TQ$ is the projection to $D$ given by $P_D(X)=dr^\a(X)X_\a$.

In coordinates, the curvature ${\bf K}_W$ is given by the following
formula \cite[Sec. 2.1]{MarsdenKoon}:
$$
{\bf K}_W(X,Y)= d \epsilon^a (P_D (X), P_D(Y))\frac{\partial}{\partial s^a},
$$
hence, locally,
\begin{equation} \label{Eq:depsilon-coord}
d\epsilon^a |_D = C_{\a\beta}^a dr^\a \wedge dr^\beta |_D,
\end{equation}
where $\displaystyle{C_{\a\beta}^a (r,s)=  \frac{\partial
A_\beta^a}{\partial r^\a} - A_\a^b  \frac{\partial
A_\beta^a}{\partial s^b} }$. Let us define
\begin{equation} \label{Eq:Calphabeta}
K_{\a\beta}^a = C_{\a\beta}^a - C_{\beta\a}^a.
\end{equation}
For each $a=1,...,k$ the coefficients $K_{\a\beta}^a$ are
skew-symmetric and $d\epsilon^a |_D = K_{\a\beta}^a dr^\a \wedge
dr^\beta |_D,$ for $\a < \beta$. Therefore, if $X,\bar{X} \in D$
then $d\epsilon^a (X, \bar{X}) = K_{\a\beta}^a v^\a\bar{v}^\beta$
where $X= v^\a X_\a$ and $\bar{X}=\bar{v}^\beta X_\beta$.

\begin{remark}
Observe that in \cite{MarsdenKoon}, the 1-forms $\epsilon^a$ where denoted by $\omega^a$ while ${\bf K}_W$ was denoted by $B$ and the coefficients $K_{\a\beta}^a$ were $-B_{\a\beta}^a$. In this case, for  $\dot q \in D$ then $d\omega^b(\dot q, \cdot )|_D = -B_{\a\beta}^b \dot r^\a dr^\beta |_D$ (observe the correction in the sign with respect to the equation in \cite[Sec.~2.1]{MarsdenKoon}).
\end{remark}

Finally, in  \cite[Theorem 2.1]{MarsdenKoon} the almost Poisson
bracket $\{ \cdot, \cdot \}_\subM$ describing the dynamics of a
nonholonomic system was written following \cite{SchaftMaschke1994}
but in local coordinates on $Q$ adapted to the constraints
\eqref{Eq:CoordMK}. That is, $\{ \cdot, \cdot \}_\subM$ was computed
from the canonical Poisson bracket on $T^*Q$ but written in terms of
the adapted coordinates $(r^\a, s^a, \tilde p_\a, \tilde p_a)$. As a
result, the almost Poisson bracket $\{ \cdot, \cdot \}_\subM$ on
$\M$, written in local coordinates $(r^\a, s^a, \tilde p_\a)$, has
the following form \cite{MarsdenKoon}
\begin{equation}\label{Eq:NHbracket-coord} \{q^i,q^j\}_\subM  = 0, \quad
\{r^\a,\tilde p_\beta \}_\subM  = \delta_\a^\beta, \quad
\{s^a,\tilde p_\a\}_\subM  = -A_\a^a,\quad
\{\tilde p_\a,\tilde p_\beta\}_\subM  = K_{\a\beta}^b J_b^\gamma \tilde p_\gamma
\end{equation}

%%%%%%%%%%%%%%%%%%%%%%%%%%%%%%%%%%%%%%%%%%%%%%%%%%%%%%%%%%%%%%%%%%%%%%
\section{The coordinate version of the Jacobiator formula}
\label{S:coord}

\subsection{Interpretation of the adapted coordinates}

In this section, we will relate the choice of the coordinates
proposed in \cite{MarsdenKoon} with the choice of a complement $\W$
done in \cite{paula} (see \eqref{Eq:CoordMK} and
\eqref{Eq:SplittingOfTM}, respectively).  We will also connect the
{\it curvature} \eqref{Eq:Kw-MK} with the 2-form \eqref{Def:K}, and
the nonholonomic bivector $\pi_\nh$ with the bracket $\{ \cdot ,
\cdot \}_\subM$ given in \eqref{Eq:Pinh} and
\eqref{Eq:NHbracket-coord}, respectively.

Consider a nonholonomic system on a manifold $Q$ given by a
lagrangian $L$ and a nonintegrable distribution $D$. Let us consider
local coordinates $(r^\a, s^a)$ adapted to the constraints as in
\eqref{Eq:CoordMK}.

\begin{lemma}\label{L:Equival-Coord}
The choice of coordinates $(r^\a,s^a)$ adapted to the constraints
\eqref{Eq:CoordMK}, induce a complement $W$ of $D$ on $TQ$ such that
\begin{equation} \label{Eq:Decomp-TQ}
TQ = D \oplus W, \qquad \mbox{where} \quad  W =
\textup{span}\left\{\frac{\partial}{\partial s^a} \right\}.
\end{equation}
\end{lemma}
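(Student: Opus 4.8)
The plan is to show directly that the distributions $D$ and $W$ as defined span $TQ$ at each point and intersect trivially, so that the sum is indeed a direct sum. The essential observation is that the adapted coordinates $(r^\a,s^a)$ come equipped with two natural local frames: the frame $\{X_\a, \tfrac{\partial}{\partial s^a}\}$ adapted to the splitting we want, and the coordinate frame $\{\tfrac{\partial}{\partial r^\a}, \tfrac{\partial}{\partial s^a}\}$. These are related by the unipotent (hence invertible) change of basis $X_\a = \tfrac{\partial}{\partial r^\a} - A_\a^a \tfrac{\partial}{\partial s^a}$, so $\{X_\a, \tfrac{\partial}{\partial s^a}\}$ is again a local frame for $TQ$.

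First I would verify that the vector fields $X_\a = \tfrac{\partial}{\partial r^\a} - A_\a^a \tfrac{\partial}{\partial s^a}$ genuinely span $D$. This is immediate from the defining relation \eqref{Eq:CoordMK}: evaluating the annihilator forms gives $\epsilon^b(X_\a) = ds^b(X_\a) + A_\gamma^b dr^\gamma(X_\a) = -A_\a^b + A_\a^b = 0$ for all $a,b$, so each $X_\a$ lies in $D = (D^\circ)^\circ$. Since there are exactly $n-k$ of them and they are linearly independent (their $\tfrac{\partial}{\partial r^\a}$-components form an identity block), and $\dim D = n-k$, they form a local basis of $D$. This is already recorded in the discussion preceding the lemma, so I would simply cite it.

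Next I would check that $W = \textup{span}\{\tfrac{\partial}{\partial s^a}\}$ is a genuine complement. For the intersection, suppose $v \in D_q \cap W_q$; writing $v = c^a \tfrac{\partial}{\partial s^a} \in W$ and applying $\epsilon^b$ gives $\epsilon^b(v) = c^b = 0$, since $v \in D$ forces all $\epsilon^b(v)=0$ and $\epsilon^b(\tfrac{\partial}{\partial s^a}) = \delta_a^b$. Hence $D_q \cap W_q = \{0\}$. For the dimension count, $\dim D_q + \dim W_q = (n-k) + k = n = \dim T_qQ$, so the trivial intersection forces $T_qQ = D_q \oplus W_q$. Equivalently, one notes that $\{X_\a, \tfrac{\partial}{\partial s^a}\}$ is a frame, with $\{X_\a\}$ spanning $D$ and $\{\tfrac{\partial}{\partial s^a}\}$ spanning $W$, which gives the decomposition at once.

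There is no real obstacle here: the statement is essentially a bookkeeping consequence of the defining equation \eqref{Eq:CoordMK} and the duality between the frames $\{X_\a, \tfrac{\partial}{\partial s^a}\}$ and $\{dr^\a, \epsilon^a\}$ already set up in Section~\ref{S:adapted}. The only point worth stating carefully is smoothness, namely that $W$ is a smooth distribution; this follows because it is spanned by the globally-defined-in-this-chart coordinate vector fields $\tfrac{\partial}{\partial s^a}$, which are manifestly smooth. I would close by remarking that $W$ coincides with the vertical distribution of the projection $\upsilon(r^\a,s^a)=r^\a$ introduced after \eqref{Eq:M}, which is exactly what is needed to identify it with the vertical space of the Ehresmann connection $A$ in \eqref{Eq:A}.
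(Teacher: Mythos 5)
Your proposal is correct and follows essentially the same route as the paper, which states this lemma without a separate proof precisely because it is the bookkeeping consequence of the frame duality already set up in Section~\ref{S:adapted}: the vector fields $X_\a = \tfrac{\partial}{\partial r^\a} - A_\a^a\tfrac{\partial}{\partial s^a}$ span $D$, the $\tfrac{\partial}{\partial s^a}$ complete them to the frame dual to $\{dr^\a,\epsilon^a\}$, and the decomposition $TQ = D\oplus W$ follows. Your explicit verifications ($\epsilon^b(X_\a)=0$, $\epsilon^b(\tfrac{\partial}{\partial s^a})=\delta^b_a$ forcing trivial intersection, and the dimension count) simply spell out what the paper leaves implicit, and they are all accurate.
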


\medskip

The projection  $P_W: TQ \to W$ associated to the decomposition
\eqref{Eq:Decomp-TQ}  is interpreted in \cite{MarsdenKoon} as the
Ehresmann connection $A$ \eqref{Eq:A}. In this context we compare
the curvature ${\bf K}_W$ defined in \eqref{Eq:Kw-MK} (see
\cite{MarsdenKoon}) with the $\W$-valued 2-form ${\bf K}_\subW$
defined in \eqref{Def:K}.

Recall that the submanifold $\M = \kappa^\flat(D) \subset T^*Q$ is
described by local coordinates $(r^\a, s^a; \tilde p_\a)$ (see
\eqref{Eq:M}). Locally $T^*\M$ is generated by the basis
$\mathfrak{B}_{T^*\M} = \{dr^\a,\epsilon^a, d\tilde p_\a\}$. During
the rest of the paper, when there is no risk of confusion, we will
use the same notation for 1-forms on $Q$ and their pull back to $\M$
and $T^*Q$, (i.e., $\tau^* dr^\a= dr^\a$ and $\tau^*\epsilon^a=
\epsilon^a$ where $\tau:\M \to Q$ is the canonical  projection).

Since $\tau$-projectable vector fields generate $T\M$ at each point, we can consider a complement $\W$ of $\C$ generated by $\tau$-projectable vector fields $Z_a$ such that $T\tau(Z_a) \in W$. 
That is, \begin{equation}\label{Eq:C+W-coord}
\C=\textup{span}\left\{ X_\a, \frac{\partial}{\partial \tilde p_\a}
\right\} \quad \mbox{and} \quad \W =\textup{span}\left\{Z_a \ : \ T\tau(Z_a) = \frac{\partial}{\partial s^a} \right\}.
\end{equation}

\begin{lemma} \label{L:KGamma} Let $\W$ be a complement of $\C$ as in \eqref{Eq:C+W-coord} where $W$ is the complement of $D$ induced by the coordinates $(r^\a, s^a)$ as in Lemma \ref{L:Equival-Coord}. 
\begin{enumerate}
\item[$(i)$]
The $\W$-valued 2-form ${\bf K}_{\subW}$ and the curvature ${\bf K}_W$, defined in \eqref{Def:K} and \eqref{Eq:Kw-MK} respectively, are related, at each  $X,Y \in T\M$, by ${\bf K}_W (T\tau(X),T\tau(Y)) = T\tau({\bf K}_{\subW}(X,Y)).$
In local coordinates $(r^\a, s^a)$ adapted to the constraints
\eqref{Eq:CoordMK}, the following holds:
$$
{\bf K}_{\subW} |_\C = (C_{\a\beta}^a dr^\a\wedge dr^\beta |_\C )\otimes Z_a.
$$
\item[$(ii)$]
Let $\bar{\W}$ be a different complement of $\C$ such that $T\tau(\W) = T\tau(\bar{\W}) = W$.  For $X,Y \in \Gamma(\C)$ we have
$$
{\bf K}_{\subW}(X,Y) - {\bf K}_{\bar{\subW}} (X,Y) \in \Gamma(\C).
$$
\end{enumerate}
\end{lemma}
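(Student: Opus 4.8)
The plan is to reduce both parts to a single structural fact: the tangent map $T\tau\colon T\M\to TQ$ intertwines the two splittings $T\M=\C\oplus\W$ and $TQ=D\oplus W$. Concretely, writing $X=P_\subC(X)+P_\subW(X)$ and applying $T\tau$ realizes $T\tau(X)$ as a sum of a vector in $T\tau(\C)=D$ and a vector in $T\tau(\W)=W$ (the latter equality being exactly the defining property \eqref{Eq:C+W-coord} of $\W$); since $TQ=D\oplus W$ this yields the intertwining relations
\[
P_D\circ T\tau=T\tau\circ P_\subC,\qquad A\circ T\tau=T\tau\circ P_\subW.
\]
I would also record that $T\tau$ restricts to an isomorphism from $\W_m$ onto $W_{\tau(m)}$, since $T\tau(Z_a)=\partial/\partial s^a$; this injectivity of $T\tau|_\W$ lets one read off $Z_a$-components from their $T\tau$-images.

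For part $(i)$ I would first note that ${\bf K}_\subW$ and ${\bf K}_W$ are tensorial: checking $C^\infty$-bilinearity in, say, the first slot produces an extra term containing $P_\subW(P_\subC(\,\cdot\,))=0$, which vanishes. It therefore suffices to verify ${\bf K}_W(T\tau X,T\tau Y)=T\tau({\bf K}_\subW(X,Y))$ on the frame $\{X_\a,\partial/\partial\tilde p_\a\}$ of $\C$ (on which ${\bf K}_\subW$ depends, as $P_\subC X_\a=X_\a$ and $P_\subC(\partial/\partial\tilde p_\a)=\partial/\partial\tilde p_\a$). On a pair $X_\a,X_\beta$ the fields are $\tau$-projectable onto $X_\a,X_\beta\in\Gamma(D)$, so by naturality of the Lie bracket $[X_\a,X_\beta]_\M$ is $\tau$-related to $[X_\a,X_\beta]_Q$; combining this with the intertwining relations and $P_D|_D=\mathrm{id}$ gives $T\tau({\bf K}_\subW(X_\a,X_\beta))=-A([X_\a,X_\beta]_Q)={\bf K}_W(X_\a,X_\beta)$. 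On any pair involving a fibre vector $\partial/\partial\tilde p_\a$ both sides vanish, since in these coordinates $[X_\a,\partial/\partial\tilde p_\beta]=0$ and $T\tau(\partial/\partial\tilde p_\a)=0$. Bilinearity then gives the identity for all $X,Y$.

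The coordinate formula follows at once. For $X,Y\in\C$ write ${\bf K}_\subW(X,Y)=\omega^a(X,Y)Z_a$; applying $T\tau$ and using the relation just proved together with the coordinate form of ${\bf K}_W$ from \eqref{Eq:depsilon-coord} gives $\omega^a(X,Y)\,\partial/\partial s^a=d\epsilon^a(T\tau X,T\tau Y)\,\partial/\partial s^a$, so injectivity of $T\tau|_\W$ forces $\omega^a=d\epsilon^a(T\tau\,\cdot\,,T\tau\,\cdot\,)|_\C$. Substituting $d\epsilon^a|_D=C_{\a\beta}^a\,dr^\a\wedge dr^\beta|_D$ and $dr^\mu(X_\a)=\delta^\mu_\a$, $dr^\mu(\partial/\partial\tilde p_\a)=0$ then produces ${\bf K}_\subW|_\C=(C_{\a\beta}^a\,dr^\a\wedge dr^\beta|_\C)\otimes Z_a$.

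For part $(ii)$ the decisive observation is that the two splittings $T\M=\C\oplus\W=\C\oplus\bar{\W}$ share the same $\C$, so for $X,Y\in\Gamma(\C)$ one has $P_\subC(X)=X$, $P_\subC(Y)=Y$ in either splitting and the inner bracket $[X,Y]$ is common; hence ${\bf K}_\subW(X,Y)-{\bf K}_{\bar{\subW}}(X,Y)=-(P_\subW-P_{\bar{\subW}})([X,Y])$. Writing $\mathrm{id}=P_{\subC}^{\W}+P_\subW=P_{\subC}^{\bar{\W}}+P_{\bar{\subW}}$ gives $P_\subW-P_{\bar{\subW}}=P_{\subC}^{\bar{\W}}-P_{\subC}^{\W}$, which is $\C$-valued, so the difference lies in $\Gamma(\C)$. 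I expect the main obstacle to be the first identity of part $(i)$: one must organize the tensoriality reduction and the naturality of the Lie bracket with care, because $P_\subC(X)$ need not be $\tau$-projectable for arbitrary $X$, and the argument only goes through because the computation is pinned to the projectable frame $\{X_\a\}$ of $\C$, on which $P_\subC$ acts as the identity.
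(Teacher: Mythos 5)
Your proof is correct, but it reaches the two conclusions by a route that genuinely differs from the paper's. For part $(i)$ the paper never reduces to a frame check: writing the projection in the dual basis as $P_\subW(V)=\tau^*\epsilon^a(V)\,Z_a$, it computes, for $X,Y\in\Gamma(\C)$,
\[
{\bf K}_\subW(X,Y)=-\tau^*\epsilon^a([X,Y])\,Z_a=d(\tau^*\epsilon^a)(X,Y)\,Z_a=d\epsilon^a(T\tau(X),T\tau(Y))\,Z_a,
\]
using Cartan's formula for the 1-forms $\tau^*\epsilon^a$ (which annihilate $\C$) together with naturality of $d$ under pullback; both the relation to ${\bf K}_W$ and the coordinate expression drop out of this single computation. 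You instead prove the abstract relation first --- tensoriality of the two curvature forms, the intertwining identities $P_D\circ T\tau=T\tau\circ P_\subC$ and $A\circ T\tau=T\tau\circ P_\subW$, and a verification on the projectable frame $\{X_\a,\partial/\partial\tilde p_\a\}$ via naturality of the Lie bracket --- and only then recover the coordinate formula from injectivity of $T\tau|_\W$. Your version is longer, but it buys something real: the lemma asserts the identity for all $X,Y\in T\M$, whereas the paper's computation is carried out only for sections of $\C$, and passing from $\Gamma(\C)$ to arbitrary arguments requires precisely your relation $P_D\circ T\tau=T\tau\circ P_\subC$, which the paper leaves implicit. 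In part $(ii)$ the paper again computes in dual bases, expressing the difference as a multiple of $\bar Z_a-Z_a$ and invoking $\bar Z_a-Z_a\in\ker T\tau\subset\C$; your identity $P_\subW-P_{\bar{\subW}}=P^{\bar{\W}}_\subC-P^{\W}_\subC$ never uses the hypothesis $T\tau(\W)=T\tau(\bar{\W})=W$, so it proves the stronger statement that for \emph{any} two complements of $\C$ the difference of the curvatures is $\C$-valued on $\Gamma(\C)$. Both arguments are sound; the paper's is the shorter computation, while yours is more structural and makes the precise scope of the lemma (all of $T\M$ in $(i)$, arbitrary complements in $(ii)$) explicit.
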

\begin{proof}
$(i)$ During this proof and to avoid confusion, we will work with the
basis $\{\tau^*dr^\a,\tau^*\epsilon^a, d\tilde p_\a\}$ of $T^*\M$,
keeping $dr^\a$ and $ds^a$ to denote 1-forms on $Q$. Let us consider
the basis $\mathfrak{B}=\{X_\a, \frac{\partial}{\partial \tilde
p_\a}, Z_a\}$ of $T\M$ adapted to $\C \oplus \W$ and its dual $\mathfrak{B}^*= \{\tau^* dr^a, \Psi_\a , \tau^*\epsilon^a\}$ where $\Psi_\beta(X_\a) = \Psi_\beta(Z_a) = 0$ and
$\Psi_\beta( \frac{\partial}{\partial \tilde p_\a} ) =
\delta_{\a\beta}$.
Then, for $X,Y \in \Gamma(\C)$,
$$
{\bf K}_{\subW}(X,Y) =   - P_{\subW}( [X,Y] ) = - \tau^*\epsilon^a([X,Y])Z_a = d\tau^*\epsilon^a(X,Y) Z_a = d\epsilon^a(T\tau(X), T\tau(Y)) Z_a.
$$
Therefore, $T\tau({\bf K}_{\subW}(X,Y)) = d\epsilon^a(T\tau(X), T\tau(Y))  \otimes \frac{\partial}{\partial s^a} = {\bf K}_W(T\tau(X), T\tau(Y)).$
Finally, since $T\tau(X), T\tau(Y) \in \Gamma(D)$ (see \eqref{Eq:C})
and using \eqref{Eq:depsilon-coord} we obtain
$$
{\bf K}_{\subW}|_\C =   (C_{\a\beta}^a \tau^*dr^\a \wedge \tau^*dr^\beta |_\C ) \otimes Z_a.
$$
Using our simplified notation ($\tau^*dr^\a = dr^\a$) we obtain the desired formula.

$(ii)$ Let $\mathfrak{B}$ and $\mathfrak{B}^*$ be the basis as in item $(i)$. Consider also $\bar{\mathfrak{B}}=\{X_\a, \frac{\partial}{\partial \tilde
p_\a}, \bar{Z}_a\}$ a basis of $T\M$ adapted to $T\M=\C \oplus \bar{\W}$ such that $T\tau(\bar{Z}_a) = \frac{\partial}{\partial s^a}$ and its dual  $\bar{\mathfrak{B}}^*=\{dr^\a,\bar{\Psi}_\a, \epsilon^a\}$,
such that  $\bar{\Psi}_\beta(X_\a) = \bar{\Psi}_\beta(\bar{Z}_a) = 0$ and $\bar{\Psi}_\beta(\frac{\partial}{\partial \tilde p_\a} ) = \delta_{\a\beta}$. Then we have that, for $X,Y \in \C$,
$$
{\bf K}_{\bar{\subW}} (X,Y) = - P_{\bar{\subW}} ([X,Y]) = \epsilon^a([X,Y]) \bar{Z}_a  = {\bf K}_{\subW}(X,Y) + \epsilon^a([X,Y]) \otimes(\bar{Z}_a - Z_a).
$$
Since  $\bar{Z}_a - Z_a \in \textup{Ker}\, T\tau \subset \C$ then
${\bf K}_\subW(X,Y) - {\bf K}_{\bar{\subW}} (X,Y)  \in \C$.
\end{proof}

\begin{remark} \label{Prop:Ksemi-basic} Note that the coordinates description of 
${\bf K}_\subW$ shows that it is semi-basic with respect to the bundle projection $\tau:\M \to Q$, i.e., \ ${\bf i}_X {\bf K}_\subW =0$ \ if \ $T\tau(X)=0$. This is in agreement with \cite[Prop.~3.1]{paula} 
\end{remark}

In order to write the nonholonomic bivector $\pi_\nh$ using
\eqref{Eq:Pinh} but in local coordinates $(r^\a, s^a; \tilde p_\a)$
on $\M$ we study the local description of the 2-section
$\Omega_\subM|_\C$.

The canonical 1-form $\Theta_Q$ on $T^*Q$ is given, in local
coordinates $(r^\a, s^a; \tilde p_\a, \tilde p_a)$, by $\Theta_Q =
\tilde p_\a dr^\a + \tilde p_a \epsilon^\a$. Then, it is
straightforward to see that the canonical 2-form $\Omega_Q$ is
written locally as
$$
\Omega_Q= dr^\a \wedge d\tilde p_\a  +  \epsilon^a \wedge d\tilde p_a - \tilde p_a d\epsilon^a.
$$
Recall that $\iota: \M \to T^*Q$ is the natural inclusion, so the pull back of $\Omega_Q$ to $\M$ is given by
\begin{equation} \label{Eq:OmM-coord}
\Omega_\subM = \iota^*\Omega_Q = dr^\a \wedge d\tilde p_\a  + \iota^* \epsilon^a \wedge d\iota^*(\tilde p_a) - \iota^*(\tilde p_a) d(\iota^*\epsilon^a),
\end{equation}
where $dr^\a$ and $d\tilde p_\a$ are considered as 1-forms on $\M$.

Therefore,
\begin{equation}\label{Eq:OmC-coord}
\begin{split}
\Omega_\subM |_\C & = dr^\a \wedge d\tilde p_\a - \iota^*(\tilde p_a) \iota^*(d \epsilon^a) |_\C \\
& =  dr^\a \wedge d\tilde p_\a -  J_a^\delta \tilde p_\delta C_{\a\beta}^a dr^\a \wedge dr^\beta |_\C,
\end{split}
\end{equation}
where in the last equation we use \eqref{Eq:M} and the coordinate
version of $d \epsilon |_D$ given in \eqref{Eq:depsilon-coord}.
Applying \eqref{Eq:Pinh} to the 2-form $\Omega_\subM$ and $\C$,
given in \eqref{Eq:OmM-coord} and \eqref{Eq:C+W-coord} respectively, we
compute the nonholonomic bivector field $\pi_\nh$ on $\M$:
\begin{equation}\label{Eq:NH-bivector}
\pi_\nh^\sharp(dr^\a) =\frac{\partial}{\partial \tilde p_\a},\qquad
\pi_\nh^\sharp(ds^a) = -A_\a^a \frac{\partial}{\partial \tilde
p_\a}, \qquad \pi_\nh^\sharp(d\tilde p_\a) = - X_\a + J_a^\delta
\tilde p_\delta K_{\alpha\beta}^a\frac{\partial}{\partial \tilde
p_\beta}.
\end{equation}

\begin{lemma}\label{L:NHbracket}
The almost Poisson bracket $\{\cdot , \cdot\}_\subM$ given in
\eqref{Eq:NHbracket-coord} (see \cite[Theorem 2.1]{MarsdenKoon}) is
the coordinate version of the nonholonomic bracket $\{\cdot , \cdot
\}_{\emph\nh}$ associated to the bivector field $\pi_{\emph\nh}$
obtained from \eqref{Eq:Pinh}.
\end{lemma}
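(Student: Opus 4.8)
The plan is to show that the nonholonomic bivector $\pi_\nh$ computed in \eqref{Eq:NH-bivector} reproduces exactly the bracket relations \eqref{Eq:NHbracket-coord}. Since the bracket and the bivector are in one-to-one correspondence via $\{f,g\}_\nh = \pi_\nh(df,dg)$, and since both are skew-symmetric and satisfy Leibniz, it suffices to verify the four families of relations in \eqref{Eq:NHbracket-coord} on the coordinate functions $(r^\a, s^a, \tilde p_\a)$, which generate $C^\infty(\M)$ locally. The basic identity I will use repeatedly is $\{f,g\}_\nh = \pi_\nh(df,dg) = dg(\pi_\nh^\sharp(df))$, together with the sharp map already given in \eqref{Eq:NH-bivector}.

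First I would record how the coordinate differentials pair with the image vectors of $\pi_\nh^\sharp$. From \eqref{Eq:NH-bivector} the three generators are $\pi_\nh^\sharp(dr^\a) = \partial/\partial\tilde p_\a$, $\pi_\nh^\sharp(ds^a) = -A_\a^a\,\partial/\partial\tilde p_\a$, and $\pi_\nh^\sharp(d\tilde p_\a) = -X_\a + J_a^\delta\tilde p_\delta K_{\alpha\beta}^a\,\partial/\partial\tilde p_\beta$, where $X_\a = \partial/\partial r^\a - A_\a^a\,\partial/\partial s^a$. Then I would compute each bracket directly: $\{r^\a,\tilde p_\beta\}_\nh = d\tilde p_\beta(\pi_\nh^\sharp(dr^\a)) = d\tilde p_\beta(\partial/\partial\tilde p_\a) = \delta_\a^\beta$; similarly $\{s^a,\tilde p_\a\}_\nh = d\tilde p_\a(\pi_\nh^\sharp(ds^a)) = -A_\beta^a\,d\tilde p_\a(\partial/\partial\tilde p_\beta) = -A_\a^a$. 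The vanishing of $\{q^i,q^j\}_\nh$ follows because $\pi_\nh^\sharp(dr^\a)$ and $\pi_\nh^\sharp(ds^a)$ are both vertical (proportional to $\partial/\partial\tilde p_\a$), so pairing them with $dr^\beta$ or $ds^b$ gives zero. Finally, the most substantive relation, $\{\tilde p_\a,\tilde p_\beta\}_\nh = d\tilde p_\beta(\pi_\nh^\sharp(d\tilde p_\a))$, picks out the $\partial/\partial\tilde p_\beta$ component of $\pi_\nh^\sharp(d\tilde p_\a)$, namely $J_a^\gamma\tilde p_\gamma K_{\alpha\beta}^a$, matching \eqref{Eq:NHbracket-coord} after relabeling indices.

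The one point requiring care, and the step I expect to be the main obstacle, is the bookkeeping behind $\{\tilde p_\a,\tilde p_\beta\}_\nh$: I must confirm that the antisymmetrized coefficient $K_{\a\beta}^a = C_{\a\beta}^a - C_{\beta\a}^a$ appearing in \eqref{Eq:NH-bivector} is genuinely the skew part one reads off the bracket, and that it is consistent with the definition \eqref{Eq:Calphabeta} and with the form of $\Omega_\subM|_\C$ in \eqref{Eq:OmC-coord}. Concretely, one must check that when $\pi_\nh^\sharp$ is obtained by inverting $\Omega_\subM|_\C$ on $\C$ via \eqref{Eq:Pinh}, the curvature term $J_a^\delta\tilde p_\delta C_{\a\beta}^a\,dr^\a\wedge dr^\beta$ contributes through its antisymmetrization, producing $K_{\a\beta}^a$ rather than $C_{\a\beta}^a$. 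This is already encoded in passing from \eqref{Eq:OmC-coord} to \eqref{Eq:NH-bivector}, so I would simply invoke that computation and confirm the index conventions line up.

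Having matched all four families of relations on generators, I conclude by invoking bilinearity and the Leibniz rule: since $\{\cdot,\cdot\}_\nh$ and $\{\cdot,\cdot\}_\subM$ agree on the generating functions $(r^\a,s^a,\tilde p_\a)$ of $C^\infty(\M)$ and both are derivations in each argument, they agree on all of $C^\infty(\M)$. Hence $\{\cdot,\cdot\}_\subM$ is precisely the coordinate expression of the nonholonomic bracket $\{\cdot,\cdot\}_\nh$ determined by $\pi_\nh$ through \eqref{Eq:Pinh}, which is the assertion of the lemma.
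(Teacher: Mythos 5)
Your proposal is correct and follows essentially the same route as the paper: the paper establishes the lemma precisely by computing $\pi_\nh^\sharp$ on the coordinate differentials via \eqref{Eq:Pinh} and \eqref{Eq:OmC-coord} (the display \eqref{Eq:NH-bivector}), from which the relations \eqref{Eq:NHbracket-coord} are read off by pairing, exactly as you do. Your explicit verification of the pairings (including the appearance of the skew coefficient $K_{\a\beta}^a$) and the final Leibniz/generators argument simply spell out details the paper leaves implicit.
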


\subsection{The Jacobiator in adapted coordinates}
Consider a nonholonomic system on a manifold $Q$ given by a lagrangian $L$ and a constraint distribution $D$ such that
$\epsilon^a$,  for $a=1,...,k$, are 1-forms generating $D^\circ$.
Consider local coordinates $(r^\a,s^a)$ on $Q$ adapted to the
constraints as in \eqref{Eq:CoordMK}. Let $(r^\a, s^a;\tilde p_\a)$
be the coordinates on the manifold $\M=\kappa^\flat(D)$. By Lemma
\ref{L:NHbracket}, the almost Poisson bracket $\{\cdot, \cdot\}_\M$
\eqref{Eq:NHbracket-coord} is the coordinate version of the bivector
field $\pi_\nh$ given in \eqref{Eq:Pinh}, and thus Koon-Marsden
formula for the Jacobiator can be written directly with respect to
$\{ \cdot , \cdot \}_\nh$.

\begin{theorem}\cite[Sec.~2.5]{MarsdenKoon}
The Jacobiator of the nonholonomic bracket $\{\cdot, \cdot
\}_{\emph\nh}$, in coordinates $(r^a, s^a;\tilde p_\a)$ on $\M$,  is
given by the following formula
\begin{eqnarray}
\{\tilde p_\gamma, \{r^\a,\tilde p_\beta \}_{\emph\nh} \}_{\emph\nh}  + \textup{cyclic} & = & J_b^\a K_{\beta\gamma}^b, \nonumber\\
\{\tilde p_\beta, \{s^a,\tilde p_\a\}_{\emph\nh} \}_{\emph\nh}  + \textup{cyclic} & = & -K_{\a\beta}^a -     A_\gamma^a J_b^\gamma K_{\a\beta}^b, \label{Eq:JacobMK}\\
\{\tilde p_\gamma, \{\tilde p_\a,\tilde p_\beta\}_{\emph\nh} \}_{\emph\nh}  +  \textup{cyclic} & = & \tilde p_\tau J_a^\tau  \frac{\partial A_\gamma^a}{\partial s^b}K_{\a\beta}^b  + \tilde p_\tau J_a^\tau K_{\delta\gamma}^a J_b^\delta K_{\a\beta}^b - \tilde p_\tau K_{\a\beta}^b \left( \frac{\partial J_b^\tau}{\partial
r^\gamma} - A_\gamma^a \frac{\partial J_b^\tau}{\partial s^a} \right) +  \textup{cyclic}, \nonumber
\end{eqnarray}
with all other combinations equal to zero and where $J_b^\a$,
$K_{\a\beta}^a$ and $A_\a^a$ are the functions on $Q$ defined in
\eqref{Eq:M}, \eqref{Eq:Calphabeta} and \eqref{Eq:CoordMK}, respectively.
\end{theorem}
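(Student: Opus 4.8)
The plan is to feed the coordinate data collected in Section~\ref{S:coord} into the intrinsic identity \eqref{Eq:Jacobiator} of Theorem~\ref{T:MarsdenKoon}. Using \eqref{E:Jacobi}, I first read each left-hand side of \eqref{Eq:JacobMK} as the value of the Jacobiator trivector $\tfrac{1}{2}[\pi_\nh,\pi_\nh]$ on a triple of the basic covectors $\{dr^\a,\epsilon^a,d\tilde p_\a\}$ of $T^*\M$: the three nonzero lines correspond respectively to $(dr^\a,d\tilde p_\b,d\tilde p_\gamma)$, $(ds^a,d\tilde p_\a,d\tilde p_\b)$ and $(d\tilde p_\a,d\tilde p_\b,d\tilde p_\gamma)$. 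I would then observe that a nonzero value forces at least two of the three slots to be momenta, which already accounts for the clause that all other combinations vanish.

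The engine of the computation is the interaction between Lemma~\ref{L:KGamma}$(i)$ and the sharp map \eqref{Eq:NH-bivector}. Because ${\bf K}_\subW(X,Y)=d\epsilon^a(T\tau(X),T\tau(Y))\,Z_a$ only sees the $D$-part of its arguments, and because among $\pi^\sharp_\nh(dr^\a)=\partial/\partial\tilde p_\a$, $\pi^\sharp_\nh(ds^a)=-A^a_\a\,\partial/\partial\tilde p_\a$ and $\pi^\sharp_\nh(d\tilde p_\a)=-X_\a+J^\delta_a\tilde p_\delta K^a_{\a\b}\,\partial/\partial\tilde p_\b$ only the last carries an $X_\a$, the unique surviving curvature value is
\[
{\bf K}_\subW\big(\pi^\sharp_\nh(d\tilde p_\mu),\pi^\sharp_\nh(d\tilde p_\nu)\big)=d\epsilon^a(X_\mu,X_\nu)\,Z_a=K^a_{\mu\nu}\,Z_a .
\]
This kills every triple with fewer than two $d\tilde p$'s and collapses each cyclic sum in \eqref{Eq:Jacobiator} to the single term whose curvature slot is a pair of momenta.

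What remains is to pair $K^a_{\mu\nu}Z_a$ against $\Omega_\M$ and against the third covector. From the coordinate form \eqref{Eq:OmM-coord} of $\Omega_\M$ and from $T\tau(Z_a)=\partial/\partial s^a$ I would record the elementary identities $\Omega_\M(Z_a,\partial/\partial\tilde p_\nu)=J^\nu_a$, $dr^\a(Z_a)=0$, $ds^a(Z_b)=\delta^a_b$, and
\[
\Omega_\M(Z_a,X_\gamma)=-d\tilde p_\gamma(Z_a)+\tilde p_\delta\Big(\tfrac{\partial J^\delta_a}{\partial r^\gamma}-A^c_\gamma\tfrac{\partial J^\delta_a}{\partial s^c}\Big)-J^\delta_c\tilde p_\delta\,\tfrac{\partial A^c_\gamma}{\partial s^a}.
\]
These dispatch the first two lines at once: for $(dr^\a,d\tilde p_\b,d\tilde p_\gamma)$ the surviving term is $\Omega_\M(K^b_{\b\gamma}Z_b,\partial/\partial\tilde p_\a)=J^\a_bK^b_{\b\gamma}$, and for $(ds^a,d\tilde p_\a,d\tilde p_\b)$ the pairing $-ds^a(K^c_{\a\b}Z_c)=-K^a_{\a\b}$ together with $\Omega_\M(K^c_{\a\b}Z_c,\pi^\sharp_\nh(ds^a))=-A^a_\gamma J^\gamma_cK^c_{\a\b}$ reproduces the second right-hand side.

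The hard part is the third, purely momentum, line, where all three cyclic terms survive and the curvature appears twice. I expect it to decompose as follows. The $-X_\gamma$ part of $\pi^\sharp_\nh(d\tilde p_\gamma)$, paired through $\Omega_\M(Z_a,X_\gamma)$, supplies the $\partial A/\partial s$ term and the $(\partial J/\partial r - A\,\partial J/\partial s)$ term; the fibre part $J^\delta_b\tilde p_\delta K^b_{\gamma\nu}\,\partial/\partial\tilde p_\nu$, paired through $\Omega_\M(Z_a,\partial/\partial\tilde p_\nu)=J^\nu_a$, supplies the term quadratic in the curvature $\tilde p_\tau J^\tau_a K^a_{\delta\gamma}J^\delta_b K^b_{\a\b}$; and the complement-dependent quantity $d\tilde p_\gamma(Z_a)$ sitting inside $\Omega_\M(Z_a,X_\gamma)$ cancels precisely against the $-\gamma({\bf K}_\subW)$ piece $-d\tilde p_\gamma(K^a_{\a\b}Z_a)$. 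This cancellation is the conceptual crux, since it removes all dependence on the chosen lifts $Z_a$—as it must, by Lemma~\ref{L:KGamma}$(ii)$ and Remark~\ref{Prop:Ksemi-basic}—and leaves an expression built from $K$, $J$ and $A$ alone; matching it against \eqref{Eq:JacobMK} is then a purely mechanical relabelling of summation indices, with the bookkeeping of signs and of the antisymmetry of $K^a_{\a\b}$ being the only place where care is required.
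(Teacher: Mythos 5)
Your proposal is correct and follows essentially the same route as the paper's proof of Theorem~\ref{T:Equivalence}: evaluate the global formula \eqref{Eq:Jacobiator} on the basis $\{dr^\a, ds^a, d\tilde p_\a\}$, use semi-basicness of ${\bf K}_\subW$ (Remark~\ref{Prop:Ksemi-basic}) to reduce each cyclic sum to the terms with two momenta in the curvature slot, identify ${\bf K}_\subW(\pi^\sharp_\nh(d\tilde p_\a),\pi^\sharp_\nh(d\tilde p_\b))=K^a_{\a\b}Z_a$, and then pair against $\Omega_\M$ in coordinates. The only cosmetic difference is in the third line: the paper splits $Z_a=\partial/\partial s^a+Y_a$ with $Y_a\in\ker T\tau\subset\C$ and cancels the $Y_a$-terms via the defining relation \eqref{Eq:Pinh}, whereas you expand $\Omega_\M(Z_a,X_\gamma)$ explicitly and cancel the $d\tilde p_\gamma(Z_a)$ contribution against the $-\gamma({\bf K}_\subW)$ term --- the same lift-independence, verified by a direct computation instead of the abstract identity.
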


The next result relates the coordinate formula \eqref{Eq:JacobMK} of
the Jacobiator with the coordinate-free formula given in Theorem
\ref{T:MarsdenKoon}.

\begin{theorem} \label{T:Equivalence}
Let $(r^\a, s^a)$ be coordinates on $Q$ adapted to the constraints
as in \eqref{Eq:CoordMK} and let $W$ be the complement of $D$
induced by the coordinates (Lemma \ref{L:Equival-Coord}). The
Koon-Marsden Jacobiator formula \eqref{Eq:JacobMK} for the
nonholonomic bracket $\{\cdot, \cdot \}_{\emph\nh}$ is the
coordinate version of the Jacobiator formula given in Theorem
\ref{T:MarsdenKoon} for $\W$ any complement of $\C$ as in \eqref{Eq:C+W-coord}.
\end{theorem}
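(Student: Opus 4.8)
The plan is to evaluate both sides of \eqref{Eq:Jacobiator} on the coordinate differentials $\alpha,\beta,\gamma\in\{dr^\a,\,ds^a,\,d\tilde p_\a\}$ and to check that the resulting scalar identities are exactly the three lines of \eqref{Eq:JacobMK}. For the left-hand side I would use the Schouten-bracket relation \eqref{E:Jacobi}: choosing $\alpha=df$, $\beta=dg$, $\gamma=dh$ with $f,g,h$ among the coordinate functions $r^\a,s^a,\tilde p_\a$ turns $\tfrac12[\pi_\nh,\pi_\nh](df,dg,dh)$ into the cyclic sum of nested nonholonomic brackets, which is precisely the left columns of \eqref{Eq:JacobMK} once $\{\cdot,\cdot\}_\subM$ is identified with the bracket of $\pi_\nh$ (Lemma \ref{L:NHbracket}). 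Since the right-hand side of \eqref{Eq:Jacobiator} computes the same intrinsic trivector for \emph{every} complement $\W$, I am free to fix the convenient choice $Z_a=\partial/\partial s^a$ on $\M$, which is a complement of $\C$ as in \eqref{Eq:C+W-coord} because $\{X_\a,\partial/\partial s^a,\partial/\partial\tilde p_\a\}$ spans $T\M$ (indeed $\partial/\partial r^\a=X_\a+A_\a^a\,\partial/\partial s^a$). I then substitute the coordinate expressions \eqref{Eq:NH-bivector} for $\pi_\nh^\sharp$, Lemma \ref{L:KGamma}$(i)$ for ${\bf K}_\subW$, and \eqref{Eq:OmM-coord} for $\Omega_\subM$.

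The decisive simplification comes from semi-basicity (Remark \ref{Prop:Ksemi-basic}). Reading off \eqref{Eq:NH-bivector}, both $\pi_\nh^\sharp(dr^\a)$ and $\pi_\nh^\sharp(ds^a)$ lie in $\textup{Ker}\,T\tau$, whereas $\pi_\nh^\sharp(d\tilde p_\a)=-X_\a+(\text{vertical})$ has $T\tau$-image $-X_\a\in D$. Because ${\bf K}_\subW$ is semi-basic, ${\bf K}_\subW(\pi_\nh^\sharp(\alpha),\pi_\nh^\sharp(\beta))$ vanishes unless both $\alpha$ and $\beta$ are momentum differentials. This kills all but one term of each cyclic sum and explains at once why the only nonzero coordinate Jacobiators are the three cases of \eqref{Eq:JacobMK}, namely those in which at least two of $\alpha,\beta,\gamma$ are momentum differentials. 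In the surviving term I would compute, using $dr^\mu(\pi_\nh^\sharp(d\tilde p_\nu))=-\delta^\mu_\nu$ together with Lemma \ref{L:KGamma}$(i)$ and the definition \eqref{Eq:Calphabeta}, the clean identity ${\bf K}_\subW(\pi_\nh^\sharp(d\tilde p_\mu),\pi_\nh^\sharp(d\tilde p_\nu))=K^a_{\mu\nu}Z_a$.

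It then remains to evaluate the two pieces $\Omega_\subM({\bf K}_\subW(\cdots),\pi_\nh^\sharp(\gamma))$ and $-\gamma({\bf K}_\subW(\cdots))$ of \eqref{Eq:Jacobiator}. The essential point is that the $\W$-valued vector ${\bf K}_\subW(\cdots)$ is paired through the \emph{full} $\Omega_\subM$, not through $\Omega_\subM|_\C$, so the off-diagonal blocks of \eqref{Eq:OmM-coord} are active: a direct contraction gives $\Omega_\subM(\partial/\partial s^b,\partial/\partial\tilde p_\a)=J_b^\a$ (from the $\iota^*\epsilon^a\wedge d\iota^*(\tilde p_a)$ term, using $\iota^*\tilde p_a=J_a^\beta\tilde p_\beta$), which already reproduces the first line $J_b^\a K^b_{\beta\gamma}$ of \eqref{Eq:JacobMK}. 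For the second line the $A_\a^a$-dependence of $\pi_\nh^\sharp(ds^a)$ and of $\iota^*\epsilon^a$ produces the correction $-A_\gamma^a J_b^\gamma K^b_{\a\beta}$ alongside the leading $-K^a_{\a\beta}$. For the third line I would expand $d\iota^*(\tilde p_a)=\tilde p_\beta\,dJ_a^\beta+J_a^\beta\,d\tilde p_\beta$ and $d\iota^*(\epsilon^a)=dA_\mu^a\wedge dr^\mu$, contract against $Z_b=\partial/\partial s^b$ and $\pi_\nh^\sharp(d\tilde p_\gamma)=-X_\gamma+J_a^\delta\tilde p_\delta K^a_{\gamma\mu}\,\partial/\partial\tilde p_\mu$, and collect terms.

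I expect the third case, the $\{\tilde p_\gamma,\{\tilde p_\a,\tilde p_\beta\}_\nh\}_\nh$ Jacobiator, to be the main obstacle, since its right-hand side in \eqref{Eq:JacobMK} carries three distinct contributions involving the $s$-derivatives $\partial A_\gamma^a/\partial s^b$ and $\partial J_b^\tau/\partial s^a$ together with the quadratic term $K^a_{\delta\gamma}J_b^\delta K^b_{\a\beta}$. These all arise from differentiating the constraint functions $J_a^\beta(r,s)$ and $A_\a^a(r,s)$ as the $\W$-valued curvature $K^a_{\a\beta}Z_a$ is pushed through the $s$-dependent off-diagonal part of $\Omega_\subM$ and through the vertical correction in $\pi_\nh^\sharp(d\tilde p_\gamma)$. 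The bookkeeping of the cyclic sum is the delicate part: one must verify that the surviving contributions, after cyclic symmetrization over $(\a,\beta,\gamma)$, reorganize into exactly the stated combination, and that the apparent dependence on the choice $Z_a=\partial/\partial s^a$ drops out, consistently with Lemma \ref{L:KGamma}$(ii)$, which guarantees that any change of $\W$ alters ${\bf K}_\subW$ only by a $\C$-valued term whose net contribution to \eqref{Eq:Jacobiator} cancels. Finally I would record that every combination with fewer than two momentum differentials yields zero, matching the ``all other combinations equal to zero'' in \eqref{Eq:JacobMK}.
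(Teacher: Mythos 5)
Your proposal is correct and follows essentially the same route as the paper: evaluate \eqref{Eq:Jacobiator} on the coordinate differentials $\{dr^\a, ds^a, d\tilde p_\a\}$, use semi-basicity of ${\bf K}_\subW$ (Remark \ref{Prop:Ksemi-basic}) together with the verticality of $\pi_\nh^\sharp(dr^\a)$ and $\pi_\nh^\sharp(ds^a)$ to reduce to the three cases with at least two momentum differentials, compute ${\bf K}_\subW(\pi_\nh^\sharp(d\tilde p_\a),\pi_\nh^\sharp(d\tilde p_\beta))=K^a_{\a\beta}Z_a$, and contract with the coordinate expression \eqref{Eq:OmM-coord} of $\Omega_\subM$ (your key contraction $\Omega_\subM(\partial/\partial s^b,\partial/\partial \tilde p_\a)=J_b^\a$ is exactly what drives the paper's computation). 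The only cosmetic difference is that you fix $Z_a=\partial/\partial s^a$ at the outset, justified by the $\W$-independence of the intrinsic Jacobiator, whereas the paper keeps a general $Z_a$ and cancels the difference $Y_a=Z_a-\partial/\partial s^a\in \C$ via Lemma \ref{L:KGamma}$(ii)$ --- precisely the reduction to $\W_0=\textup{span}\{\partial/\partial s^a\}$ that the paper itself states it uses implicitly.
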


\begin{proof}
In order to prove the equivalence we write the Schouten bracket
$[\pi_\nh,\pi_\nh]$ using Theorem \ref{T:MarsdenKoon} evaluated on
the elements $\{dr^\a, ds^a, d\tilde p_\a\}$.

First, observe that by Remark \ref{Prop:Ksemi-basic}, the 2-form
${\bf K}_\subW$ defined in \eqref{Def:K} is annihilated by any of the elements $\pi_\nh^\sharp(dr^\a)$ or $\pi_\nh^\sharp(ds^\a)$ (see \eqref{Eq:NH-bivector}).  On the other hand, by Lemma \ref{L:KGamma}$(ii)$, we have that   ${\bf
K}_\subW(\pi_\nh^\sharp(d\tilde p_\a), \pi_\nh^\sharp(d\tilde
p_\beta)) = K_{\a\beta}^a Z_a$, where $Z_a \in T\M$ such that $T\tau(Z_a)= \frac{\partial}{\partial s^a}$. Moreover, observe that $\epsilon^a(Z_b) =
\delta_{\a\beta}$ and $dr^\a(Z_a)=0$.

Therefore, using the coordinate version of $\Omega_\M$
\eqref{Eq:OmM-coord} in Theorem \ref{T:MarsdenKoon} we obtain
\begin{equation*} \begin{split}
\frac{1}{2}[\pi_\nh, \pi_\nh]  (dr^\a,d\tilde p_\beta,d\tilde p_\gamma)
&= \Omega_\subM ({\bf K}_\subW(\pi_\nh^\sharp(d\tilde p_\beta), \pi_\nh^\sharp(d\tilde p_\gamma) ), \pi_\nh^\sharp(d r^\a) ) - dr^\a ( {\bf K}_\subW(\pi_\nh^\sharp(d\tilde p_\beta),\pi_\nh^\sharp(d\tilde p_\gamma) ) ) \\
& = \ J_a^\a K_{\beta\gamma}^a, \\ 
\frac{1}{2} [\pi_\nh, \pi_\nh] (d s^a, d\tilde p_\alpha,d\tilde p_\beta) & =
 \Omega_\subM({\bf K}_\subW(\pi_\nh^\sharp(d\tilde p_\a),\pi_\nh^\sharp(d\tilde p_\beta) ), \pi_\nh^\sharp(d s^a) ) - ds^a ( {\bf K}_\subW(\pi_\nh^\sharp(d\tilde p_\a),\pi_\nh^\sharp(d\tilde p_\beta) ) ) \\ & = \ - K_{\a\beta}^b A_\gamma^a J_b^\gamma  - K_{\a\beta}^a.
\end{split}
\end{equation*}
Finally, let $Y_a := Z_a-\frac{\partial}{\partial s^a} \in
\textup{Ker}\, T\tau \subset \C$. Then, we have that
\begin{equation*} \begin{split}
\frac{1}{2}[\pi_\nh, \pi_\nh] (d \tilde p_\a, d\tilde p_\beta, d \tilde
p_\gamma)
& = \Omega_\subM(K_{\a\beta}^a Z_a, \pi_\nh^\sharp(d \tilde p_\gamma) ) - d\tilde p_\gamma(K_{\a\beta}^a Z_a )  + \textup{cyclic} \\
& = \Omega_\subM(K_{\a\beta}^a \frac{\partial}{\partial s^a}, \pi_\nh^\sharp(d \tilde p_\gamma) ) + \Omega_\subM(K_{\a\beta}^a Y_a, \pi_\nh^\sharp(d \tilde p_\gamma) )  -  d\tilde p_\gamma(K_{\a\beta}^a Y_a )  + \textup{cyclic} \\
& = \Omega_\subM(K_{\a\beta}^a \frac{\partial}{\partial s^a}, \pi_\nh^\sharp(d \tilde p_\gamma) ) + \textup{cyclic} \\
& = \  \tilde p_\tau J_a^\tau  \frac{\partial A_\gamma^a}{\partial s^b}K_{\a\beta}^b  + \tilde p_\tau J_a^\tau K_{\delta\gamma}^a J_b^\delta K_{\a\beta}^b - \tilde p_\tau K_{\a\beta}^b \left( \frac{\partial J_b^\tau}{\partial
r^\gamma} - A_\gamma^a \frac{\partial J_b^\tau}{\partial s^a} \right) +  \textup{cyclic}.
\end{split} \end{equation*}
The Jacobiator on the other combinations of elements of the basis
$\{dr^\a, ds^a,d\tilde p_\a\}$ is zero. Thus,  the relation
\eqref{E:Jacobi} implies that the Jacobiator formula in Theorem
\ref{T:MarsdenKoon} evaluated in coordinates \eqref{Eq:CoordMK}
gives the Koon-Marsden formula \eqref{Eq:JacobMK}.

Observe that in this proof we are implicitly using Lemma
\ref{L:KGamma} for $\W$ and $\W_0 = \textup{span}\left\{
\frac{\partial}{\partial s^a} \right\}$.

\end{proof}

\begin{remark}
From \eqref{Eq:Jacobiator} it is straightforward to see that if the
2-form ${\bf K}_\subW$ is zero then the bivector $\pi_\nh$ is
Poisson. On the other hand, it was observed in \cite{MarsdenKoon}
that if the curvature ${\bf K}_W$ is zero then the Jacobi identity
of $\{ \cdot , \cdot \}_\nh$ is satisfied.  Using the equivalence
between ${\bf K}_\subW$ and ${\bf K}_W$ (Lemma
\ref{L:KGamma}$(i)$) we see that both 2-forms are zero when $D$
is involutive, i.e., the system is holonomic.
\end{remark}

\begin{remark}(Symmetries)
If the nonholonomic system admits a group of symmetries $G$
then $\pi_\nh$ is $G$-invariant with respect to the induced (lifted)
action on $\M$. As a consequence, the orbit projection $\M \to \M/G$
induces a reduced bivector field $\pi_\red^\nh$ on $\M/G$ describing
the reduced dynamics. Consider $(r^\a, s^a)$ adapted coordinates to
the constraints as in \eqref{Eq:CoordMK} and $W$ the induced
complement of $D$ in $TQ$ given by Lemma \ref{L:Equival-Coord}.  Let
$V$ (respectively $\V$) be the tangent space to the orbit of the
$G$-action on $Q$ (respect. on $\M$). If $W \subset V$ then there is a unique choice of the complement $\W$ contained in $\V$:
$$
\W:= (T\tau |_\V)^{-1}(W).
$$
With this choice of $\W$, Theorem \ref{T:MarsdenKoon} induces a formula for the
Jacobiator of the reduced bivector $\pi_\red^\nh$ (see
\cite[Sec.4]{paula}).

There are a number of examples of systems verifying  that the complement $W$
induced by the coordinates adapted to the constraints
\eqref{Eq:CoordMK}  (as in Lemma \ref{L:Equival-Coord}) is vertical with respect to a $G$-symmetry, including the vertical rolling disk, the nonholonomic particle and the Chaplygin sphere, see \cite[Sec.~7]{paula}.

\end{remark}

On the other hand, it may happen that a given example is described
in coordinates that are not adapted to the constraints. Then, it is
better to use the coordinate free formula of Theorem
\ref{T:MarsdenKoon}.

%%%%%%%%%%%%%%%%%%%%%%%%%%%%%%%%%%%%%%%%%%%%%%%%%%%%%%%%%%%%%%%%%%%%%
\section{Example: the snakeboard}\label{S:ex}

The snakeboard describes the dynamics of a board with two sets of
actuated wheels, one on each end of the board.  A human rider
generates forward motion by twisting his body back and forth, and
thus producing a movement on the wheels. This effect is modeled as a
momentum wheel which sits in the middle of the board and is allowed
to spin about the vertical axis. The configuration of the snakeboard
is given by the position and orientation of the board in the plane,
the angle of the momentum wheel and the angles of the back and front
wheels. Therefore, the configuration manifold $Q$ is given by
$Q=SE(2)\times (-\pi/2,\pi/2) \times S^1$ with local coordinates
$q=(x,y,\theta,\psi,\phi)$, where $(x,y,\theta)$ represents the
position and orientation of the center of the board, $\psi$ is the
angle of the momentum wheel relative to the board and $\phi$ is the
angle of the front and back wheel as in \cite{Ostrowski} (for
details see \cite{BKMM} and \cite{MarsdenKoon}).

The Lagrangian is given by
$$
L(q, \dot q) = \frac{m}{2} (\dot x^2+\dot y^2) + \frac{mr^2}{2} \dot
\theta^2 + \frac{J_0}{2}\dot \psi^2 + J_0\dot \psi \dot \theta + J_1
\dot \phi^2,$$ where $m$ the total mass of the board, $r$ is the
distance between the center of the board and the wheels, $J_0$ is
the inertia of the rotor and $J_1$ is the inertia of each wheel.

The (nonintegrable) constraint distribution $D$ is given by the
annihilator of the following 1-forms:
\begin{equation} \label{Ex:1forms-const}
\begin{split}
\epsilon^1&=-\sin(\theta + \phi)dx+\cos(\theta+\phi)dy-r\cos\phi d\theta\\
\epsilon^2&=-\sin(\theta - \phi)dx+\cos(\theta-\phi)dy+r\cos\phi d\theta.
\end{split}
\end{equation}

\begin{remark}
The coordinates $(x,y,\theta,\psi,\phi)$ on $Q$ are not adapted to
the 1-forms of constraints $\epsilon^1, \epsilon^2$. In \cite{MK2} a
simplified version is considered where, taking $\phi \neq 0$, it is
possible to write the 1-forms of constraints in such a way that
$(x,y,\theta,\psi,\phi)$ are adapted coordinates as in
\eqref{Eq:CoordMK}. In this paper, we will work with the 1-forms
given in \eqref{Ex:1forms-const}, so, our coordinates in $Q$ are not
adapted to the constraints, even though these are the coordinates
chosen in \cite{MarsdenKoon} to study the reduction by the group of
symmetries $SE(2)$.
\end{remark}

The distribution $D$ on $Q$ is given by
\begin{equation*}
D=\textup{span} \left\{X_\psi:= \frac{\partial}{\partial \psi}, \
X_\phi:= \frac{\partial}{\partial \phi}, \  X_\subS:= -2r \cos^2\phi
\cos\theta \frac{\partial}{\partial x} -2r \cos^2\phi\sin\theta
\frac{\partial}{\partial y} +\sin(2\phi) \frac{\partial}{\partial
\theta} \right\}.
\end{equation*}

We choose the complement $W$ of $D$ generated by $\{X_1,X_2\}$ so
that $\epsilon^a(X_b) =  \delta^a_ b$ for $a,b=1,2$, that is
\begin{equation*}
\begin{split}
W=\textup{span} \left\{\begin{array}{c} \, \\ \, \end{array}\right. & X_1:= -\frac{1}{2}\sin\theta \sec\phi \frac{\partial}{\partial x} +\frac{1}{2} \cos\theta\sec\phi \frac{\partial}{\partial y}-\frac{1}{2r}\sec\phi \frac{\partial}{\partial \theta},  \\
& X_2:=\left. -\frac{1}{2}\sin\theta \sec\phi \frac{\partial}{\partial x} +\frac{1}{2} \cos\theta\sec\phi \frac{\partial}{\partial y}+\frac{1}{2r}\sec\phi \frac{\partial}{\partial \theta} \   \right\}.
\end{split}
\end{equation*}
Consider the dual basis $\mathfrak{B}_{TQ} = \{
X_\psi, X_\phi,X_\subS, X_1, X_2\}$ and $\mathfrak{B}_{T^*Q}= \{ d\psi, d\phi,
\alpha_\subS, \epsilon^1, \epsilon^2\}$ where
$$\a_\subS= -\frac{1}{2r} \cos\theta \sec^2\phi dx-\frac{1}{2r} \sin\theta \sec^2\phi dy.$$
Let us denote by $(q;v_\psi, v_\phi, v_\subS, v_1, v_2)$ the
coordinates on $TQ$ associated with the basis $\mathfrak{B}_{TQ}$
while $(q; \tilde p_\psi, \tilde p_\phi, \tilde p_\subS, \tilde p_1,
\tilde p_2)$ denote the coordinates on $T^*Q$ associated to
$\mathfrak{B}_{T^*Q}$.

The submanifold $\M = \kappa^\flat(D)= \textup{span}\{
\kappa^\flat(X_\psi),\kappa^\flat(X_\phi),\kappa^\flat(X_\subS) \}$
is defined in coordinates by
\begin{equation} \label{Ex:M}
\M = \{(q; \tilde p_\psi, \tilde p_\phi, \tilde p_\subS, \tilde p_1, \tilde p_2) \ : \ \tilde p_1= -\tilde p_2 = J_1(\phi)\tilde p_\subS +J_2(\phi)\tilde p_\psi \},
\end{equation}
where
$$J_1(\phi)= \frac{mr}{4(r^2m-J_0\sin^2\phi)} \sin\phi \sec^2\phi \qquad  \mbox{and} \qquad J_2(\phi) =- J_1(\phi)\sin(2\phi) .
$$

In order to compute the nonholonomic bivector $\pi_\nh$ describing
the dynamics, we  write the 2-form
$\Omega_\subM$ and the 2-section $\Omega_\M|_\C$ in our local coordinates. The canonical 1-form $\Theta_Q$ on $T^*Q$ is given by $\Theta_Q = \tilde p_\psi d\psi + \tilde p_\phi d\phi + \tilde
p_\subS \alpha_\subS + \tilde p_a \epsilon^a$. Then,
$$
\Omega_Q= d\psi \wedge d\tilde p_\psi + d\phi \wedge d\tilde p_\phi
+ \alpha_\subS \wedge d\tilde p_\subS - \tilde p_\subS d\alpha_\subS
+\epsilon^1 \wedge d\tilde p_1+ \epsilon^2 \wedge d\tilde p_2 -
\tilde p_1d\epsilon^1 -\tilde p_2 d\epsilon^2,
$$

Let us consider the basis
$\mathfrak{B}_{T^*\M}=\{d\phi,d\psi,\alpha_\subS,\epsilon^1,\epsilon^2,d\tilde
p_\phi,d\tilde p_\psi, d\tilde p_\subS\}$ of $T^*\M$ (here we are
using the same notation for the pullbacks of the forms to $\M$).
Recall that, on $\M$, $\tilde p_1$ and $\tilde p_2$ are given by
\eqref{Ex:M} and denoting $J_i = J_i(\phi)$ for $i=1,2$ we obtain
\begin{equation} \label{Ex:Om_M}
\begin{split}
\Omega_\subM= & d\psi \wedge d\tilde p_\psi + d\phi \wedge d\tilde p_\phi + \alpha_\subS \wedge d\tilde p_\subS - \tilde p_\subS d\alpha_\subS +
J_1\epsilon^1 \wedge d\tilde p_\subS + J_2 \epsilon^1 \wedge d \tilde p_\psi + \tilde p_\subS J'_1 \epsilon^1\wedge d\phi + \tilde p_\psi J'_2 \epsilon^1\wedge d\phi \\
&  - J_1\epsilon^2 \wedge d\tilde p_\subS - J_2 \epsilon^2 \wedge d \tilde p_\psi - \tilde p_\subS J'_1 \epsilon^2\wedge d\phi - \tilde p_\psi J'_2 \epsilon^2\wedge d\phi - (J_1\tilde p_\subS +J_2\tilde p_\psi)(d\epsilon^1-d\epsilon^2).
\end{split}
\end{equation}

On $T\M$ consider the dual basis $\mathfrak{B}_{T\M} = \left\{
X_\psi \, , \, X_\phi \, , \,  X_\subS \, , \, X_1\, , \, X_2\, , \,
\frac{\partial}{\partial \tilde p_\psi} \, , \,
\frac{\partial}{\partial \tilde p_\phi} \, , \,
\frac{\partial}{\partial p_\subS} \right\}$  associated to
$\mathfrak{B}_{T^*\M}$. Therefore, we can decompose $T\M = \C \oplus
\W$ such that
\begin{equation} \label{Ex:C+W}
\C= \textup{span}\left\{ X_\psi \, , \, X_\phi \, , \,  X_\subS \, , \, \frac{\partial}{\partial \tilde p_\psi} \, , \, \frac{\partial}{\partial \tilde p_\phi} \, , \, \frac{\partial}{\partial p_\subS}\right\} \qquad \W =\textup{span} \left\{X_1 \, , \, X_2 \right\}.
\end{equation}

Therefore, using that $d\epsilon^a |_\C = (-1)^a 2r \cos\phi
\alpha_\subS \wedge d\phi |_\C$ for $a=1,2$ and that $d\alpha_\subS
|_\C = 2\tan\phi d\phi \wedge \alpha_\subS|_\C$, the 2-section
$\Omega_\subM|_\C$ is given by
$$
\Omega_\M|_\C =  d\psi \wedge d\tilde p_\psi + d\phi \wedge d\tilde
p_\phi + \alpha_\subS \wedge d\tilde p_\subS - \tilde p_\subS
2\tan\phi d\phi \wedge \alpha_\subS + (J_1\tilde p_\subS +J_2\tilde
p_\psi) 4r \cos\phi  \alpha_\subS \wedge d\phi \ |_\C.
$$

Now, we compute the nonholonomic bracket $\pi_\nh$ using
\eqref{Eq:Pinh}
\begin{equation} \label{Ex:NHbracket}
\pi_\nh=\frac{\partial}{\partial \psi} \wedge \frac{\partial}{\partial \tilde p_\psi} + \frac{\partial}{\partial \phi} \wedge \frac{\partial}{\partial \tilde p_\phi} + X_\subS \wedge \frac{\partial}{\partial \tilde p_\subS} - (\tilde p_\subS 2\tan\phi + 4r (J_1\tilde p_\subS +J_2\tilde p_\psi)\cos\phi  )  \frac{\partial}{\partial \tilde p_\subS} \wedge \frac{\partial}{\partial \tilde p_\phi} .
\end{equation}

Therefore, the hamiltonian vector fields are
\begin{equation} \label{Ex:HamVectFields}
\begin{split}
\pi^\sharp_\nh(d\psi) & = \frac{\partial}{\partial \tilde p_\psi}, \qquad \pi_\nh^\sharp(d\phi)= \frac{\partial}{\partial \tilde p_\phi}, \\
\pi_\nh^\sharp(\alpha_\subS) & = \frac{\partial}{\partial \tilde p_\subS},  \qquad \pi_\nh^\sharp(\epsilon^i)=0, \qquad \pi_\nh^\sharp(d\tilde p_\psi) = - \frac{\partial}{\partial \psi},\\
\pi_\nh^\sharp(d\tilde p_\phi) &= - \frac{\partial}{\partial \phi} + (2\tan\phi \tilde p_\subS +4r\cos\phi(J_1\tilde p_\subS + J_2\tilde p_\psi) )\frac{\partial}{\partial \tilde p_\subS},\\
 \pi_\nh^\sharp(d\tilde p_\subS) &= - X_\subS - (2\tan\phi \tilde p_\subS +4r\cos\phi(J_1\tilde p_\subS + J_2\tilde p_\psi) )\frac{\partial}{\partial \tilde p_\phi}.
\end{split}
\end{equation}

In order to apply Theorem \ref{T:MarsdenKoon} to compute the
Jacobiator of $\pi_\nh$ we study the $\W$-valued 2-form ${\bf
K}_\subW$ defined in \eqref{Def:K} for $\W$ in \eqref{Ex:C+W}. For
$X,Y \in \C$ and using the dual basis $\mathfrak{B}_{T\M}$ and
$\mathfrak{B}_{T^*\M}$  we have that
\begin{equation*}
\begin{split}
{\bf K}_\subW (X,Y) &  =  -P_\subW([X,Y]) = -\epsilon^1([X,Y]) X_1 - \epsilon^2([X,Y]) X_2 \\
& = d\epsilon^1(X,Y) X_1 + d\epsilon^2(X,Y) X_2.
\end{split}
\end{equation*}
Therefore,
\begin{equation} \label{Ex:Kw}
{\bf K}_\subW |_\C = -2r \cos(\phi) (\alpha_\subS \wedge d\phi \, |_\C) \otimes (X_1 - X_2).
\end{equation}

Finally, we consider the 2-forms $\Omega_\subM$ and ${\bf K}_\subW$, described in \eqref{Ex:Om_M} and \eqref{Ex:Kw} and the vector fields
\eqref{Ex:HamVectFields}, to obtain, by \eqref{Eq:Jacobiator},  that
\begin{equation} \label{Ex:JacobEx}
\begin{split}
& [\pi_\nh, \pi_\nh] (d\tilde p_\phi, d\tilde p_\subS,d\psi) = 4r\cos(\phi) J_2(\phi), \\
&[\pi_\nh, \pi_\nh] (d\tilde p_\phi, d\tilde p_\subS,\alpha)  = 4r\cos(\phi) J_1(\phi)\\
& [\pi_\nh, \pi_\nh] (d\tilde p_\phi, d\tilde p_\subS,\epsilon^i)  =
(-1)^i 2r\cos(\phi), \;\;\; i=1,2,
\end{split}
\end{equation}
while on other combination of elements the Jacobiator is zero.

This example admits a symmetry given by the Lie group $SE(2)$, see
\cite{MarsdenKoon}. The reduced manifold $\M/G$ is $S^1\times
S(-\pi/2,-\pi/2) \times \R^3$ and the nonholonomic bivector field
$\pi_\nh$ is invariant by the orbit projection $\rho:\M \to \M/G$.
Thus, on $\M/G$ we have the reduced nonholonomic bivector defined at
each $\alpha \in T^*(\M/G)$ by
$$
(\pi_\red^\nh)^\sharp(\alpha) = T\!\rho \, \pi_\nh^\sharp (\rho^*\alpha).
$$
The Jacobiator of the reduced nonholonomic bivector field
$\pi_\red^\nh$ satisfies
$$
[\pi_\red^\nh, \pi_\red^\nh] (\alpha, \beta,\gamma) = T\rho \left(
[\pi_\nh, \pi_\nh]  (\rho^*\alpha, \rho^*\beta,\rho^*\gamma)
\right)
$$
for $\alpha, \beta, \gamma \in T^*(\M/G)$. So, in our example it is
simple to compute the Jacobiator of $\pi_\red^\nh$. Taking into
account that, in local coordinates, the orbit projection $\rho : \M
\to \M/G$ is given by $\rho(\psi,\phi,\theta,x,y, \tilde p_\psi,
\tilde p_\phi,\tilde p_\subS)= (\psi,\phi,\tilde p_\psi, \tilde
p_\phi,\tilde p_\subS)$, the Jacobiator of the reduced bivector
$\pi_\red^\nh$ describing the dynamics is given by
$$
[\pi_\red^\nh, \pi_\red^\nh] (d\tilde p_\phi, d\tilde p_\subS,d\psi) = 4r\cos(\phi) J_2(\phi)
$$
while on other elements of $T^*(\M/G)$ is zero.

Just to complete the example we can write, in our coordinates, the
reduced bivector field $\pi_\red^\nh$ on $\M/G$:
$$
\pi_\nh^\red = \frac{\partial}{\partial \psi} \wedge
\frac{\partial}{\partial \tilde p_\psi} + \frac{\partial}{\partial
\phi} \wedge \frac{\partial}{\partial \tilde p_\phi} - (\tilde
p_\subS 2\tan\phi + 4r (J_1\tilde p_\subS + J_2\tilde
p_\psi)\cos(\phi)  )  \frac{\partial}{\partial \tilde p_\subS}
\wedge \frac{\partial}{\partial \tilde p_\phi} .
$$


\begin{thebibliography}{99}

\bibitem{paula} P. Balseiro, The Jacobiator of nonholonomic systems and the geometry of reduced nonholonomic brackets. Preprint, 2013.

\bibitem{BS93} L. Bates and J. Sniatycki, Nonholonomic reduction,  \emph{ Rep. Math. Phys.} \textbf{32}, (1993),99--115.

\bibitem{BlochBook} A.~M. Bloch, {\em Non-holonomic mechanics and control}.
Springer Verlag, New York, (2003).

\bibitem{BKMM} A.~M. Bloch, P.~S.Krishnapasad , J.~E. Marsden
and R.~M. Murray,  Nonholonomic mechanical systems with symmetry.
{\em Arch. Rat. Mech. An.}, {\bf 136}, (1996), 21--99.

\bibitem{BorisovMamaev2008}
 A.~V. Borisov and  I.~S. Mamaev, Conservation laws, hierarchy of dynamics and explicit
 integration of nonholonomic systems.
 \emph{Regul. Chaotic Dyn.}, {\bf 13}, (2008), 443--490.


\bibitem{Chapligyn_reducing_multiplier} S.~A. Chaplygin,  On the theory of the motion of nonholonomic systems. The reducing-multiplier theorem.
Translated from \emph{ Matematicheski\u{i}  sbornik} (Russian)  {\bf 28} (1911) , no. 1 by A. V. Getling.  \emph{ Regul.  Chaotic Dyn.}, {\bf 13},
(2008), 369--376.


\bibitem{CushmannBook} R.H.Cushman, H.Duistermaat, J.\v{S}niatycki, {\em Geometry of Nonholonomically Constrained Systems}, World Scientific Co. Pte. Ltd, (2010).

\bibitem{Fernandez} O. Fernandez, T. Mestdag and A. Bloch, A generalization of Chaplygin's reducibility
Theorem. \emph{Regul. Chaotic Dyn.} {\bf 14}, (2009), 635--655.

\bibitem{FedorovJovan} 
Yu.~N. Fedorov and  B. Jovanovi{\'c}, Nonholonomic LR systems as
generalized Chaplygin systems with an invariant measure and flows on
homogeneous spaces. \emph{J. Nonlinear Sci.}, \textbf{14} (2004),
341--381.

\bibitem{Naranjo2008} L. C. Garc\'ia-Naranjo, Reduction of  almost
Poisson brackets and Hamiltonization of the Chaplygin sphere.
{\em Disc. and Cont. Dyn. Syst. Series S}, {\bf 3}, (2010), %no. 1,
37--60.


\bibitem{IbLeMaMa1999} A. Ibort, M. de Le\'on, J.~C. Marrero and D. Mart\'in de Diego,   Dirac
brackets in constrained dynamics. {\em Fortschr. Phys.} {\bf 47}, (1999), %(5),
459--492.

\bibitem{JovaChap}  B. Jovanovi{\'c}, Hamiltonization and integrability of the Chaplygin sphere in $\R^n$.  \emph{J. Nonlinear Sci.}
 {\bf 20},  (2010),  %no. 5,
 569-593.

\bibitem{MovingFrames} J. Koiller, P. P. M. Rios, K. M. Ehlers,
Moving Frames for Cotangent Bundles. {\it Reports on Mathematical
Physics} No.49 (2), (2002), 225-238.



\bibitem{Marle1998}
Ch.~M. Marle,   Various approaches
to conservative and nonconservative nonholonomic systems, {\em Rep. Math.
Phys.}, {\bf 42},  (1998), 211--229.

\bibitem{MK2} W.S. Koon, J.E. Marsden, The Hamiltonian and Lagrangian Approaches to the Dynamics of Nonholonomic Systems, {\it Reports on Mathematical Physics}, {\bf 40}, (1997), 21-62.

\bibitem{MarsdenKoon} W.S.Koon, J.E.Marsden, The Poisson Reduction of Nonholonomic Mechanical Systems,
{\it Reports on Mathematical Physics}, {\bf 42}, (1998) 101-134.

\bibitem{MarsdenRatiu}  J.E. Marsden  and  T.~S.   Ratiu,
{\em Introduction to Mechanics and Symmetry. A basic exposition of
classical mechanical systems.} Second edition. Texts in Applied
Mathematics, {\bf 17}.  Springer-Verlag, New York, (1999).


\bibitem{Ostrowski} J. Ostrowski, {\it Geometric Perspectives on the Mechanics and Control of Undulatory Locomotion.} Ph.D. Thesis, California Institute of Technology. (1995).

\bibitem{SchaftMaschke1994} A. J. van der Schaft and  B. M. Maschke,  On the Hamiltonian
formulation of nonholonomic mechanical systems, {\it Rep. on Math. Phys.}
{\bf 34} (1994), 225--233.

\end{thebibliography}
\end{document}